\author{Peter~Bulychev\inst{1}, Franck~Cassez\inst{2}, Alexandre~David\inst{1},
  Kim~G.~Larsen\inst{1}, Jean-Fran\c{c}ois~Raskin\inst{3}, Pierre-Alain Reynier\inst{4}}
\institute{	
   CISS, CS, Aalborg University, Denmark\\
   \texttt{\{pbulychev,adavid,kgl\}@cs.aau.dk}\\ 
   \and
   National ICT Australia, Sydney, Australia \\
   \texttt{Franck.Cassez@nicta.com.au} \\ 
   \and
   Computer Science Department, Universit\'e Libre de Bruxelles (U.L.B.), Belgium\\
   \texttt{jraskin@ulb.ac.be}\\
   \and 
   LIF, Aix-Marseille University \& CNRS, France\\
   \texttt{pierre-alain.reynier@lif.univ-mrs.fr}\\
   }
\title{
Controllers with Minimal Observation Power \\
(Application to Timed Systems)\thanks{Partly supported by  ANR project 
ECSPER (JC09- 472677), ERC Starting Grant inVEST-279499, Danish-Chinese Center for Cyber Physical Systems (IDEA4CPS) and VKR Center of Excellence MT-LAB.}
}
\newcommand{\setR}{\ensuremath{\mathbb{R}}}
\begin{document}

\maketitle

\vspace{-10pt}

\begin{abstract}
We consider the problem of controller synthesis 
under imperfect information in a setting where 
there is a set of available observable predicates equipped with a cost function.
The problem that we address 
is the computation of a subset of predicates sufficient for control 
and whose cost is minimal. Our solution avoids a full exploration
of all possible subsets of predicates 
and reuses some information between different iterations. 
We apply our approach to timed systems. We have developed a tool prototype and 
analyze the performance of our optimization algorithm on two case studies. 
\end{abstract}

\section{Introduction}

Timed automata by Alur and Dill~\cite{Alur94atheory} is one of the
most popular formalism for the modeling of real-time systems.  
One of the applications of Timed Automata is {\em controller
  synthesis}, i.e. the automatic synthesis of a controller strategy
that forces a system to satisfy a given specification.  
For timed systems, the controller synthesis
problem has been first solved in~\cite{MalerPS95} and
progress on the algorithm obtained in~\cite{cdfll05} has made possible
the application on examples of a practical interest.
This algorithm has been implemented in the {\sc Uppaal-Tiga}
tool~\cite{bcdfll07}, and applied to several case
studies~\cite{guided_controller_synthesis,automatic_synthesis,flexible_timeline_based,tga_based_controllers}.

The algorithm of~\cite{cdfll05} assumes that the controller has {\em perfect information} about the evolution of the system during its execution.  
However, in practice, it is common that the controller
acquires information about the state of the system via a finite set of sensors each of 
them having only a finite precision. 
This motivates to study \emph{imperfect information} games.

The first theoretical results on \emph{imperfect information} games have been obtained
in~\cite{reif84}, followed by algorithmic progresses and
  additional theoretical results in~\cite{RaskinCDH07}, as well as
  application to timed games in~\cite{BouyerDMP03,cdllr07}. This
paper extends the framework of~\cite{cdllr07} and so we consider the
notion of {\em stuttering-invariant observation-based strategies}
where the controller makes choice of actions only when changes in its observation occur. 
The observations are defined by the values of a finite set of \emph{observable state predicates}. 
Observable predicates correspond, for example, to
information that can be obtained through sensors by the controller.
In~\cite{cdllr07}, a symbolic algorithm for computing
  observation-based strategies for a \emph{fixed} set of observable predicates
is proposed, and this algorithm has been implemented in
  {\sc Uppaal-Tiga}.

  
  In the   current paper, we further develop the approach of~\cite{cdllr07} and consider 
  a set of \emph{available} observation predicates equipped with a
  cost function. Our objective is to synthesize a winning strategy that uses a
  subset of the available observable predicates with a {\em minimal
    cost}.  Clearly, this can be useful in the design process when we
  need to select sensors to build a controller.

Our algorithm works by iteratively picking different
subsets of the set of the available observable predicates, solving the game
for these sets of predicates and finally finding the controllable
combination with the minimal cost.  Our
algorithm avoids the exploration of all possible combinations by
taking into account the inclusion-set relations between different sets of observable
predicates and monotonic properties of the underlying games. 
Additionally, for efficiency reasons, our algorithm reuses, when solving the game for a new set of observation predicates,  
information computed on previous sets whenever possible. 

  
\vspace{-5pt}

\paragraph{{\bf Related works}}
Several works in the literature consider the synthesis of controllers
along with some notion of
optimality~\cite{DBLPconffsttcsBouyerCFL04,BrihayeBR05,journals/fmsd/BouyerBBR07,ChatterjeeHJS11,DBLPconfhybrid2009,Zwick96thecomplexity,ChatterjeeMH08,MNR-ATVA11}
but they consider the minimization of a cost along the execution of
the system while our aim is to minimize a static property of the
controller: the cost of observable predicates on which its winning
strategy is built.  The closest to our work is~\cite{ChatterjeeMH08}
where the authors consider the related but different problem of
turning on and off sensors during the execution in order to minimize
energy consumption.  In~\cite{HenzingerJM03}, the authors consider
games with perfect information but the discovery of interesting
predicates to establish controllability. In~\cite{DimitrovaF08} this
idea is extended to games with imperfect information. In those two
works the set of predicates is not fixed a priori, there is no cost
involved and the problems that they consider are undecidable.
In~\cite{MNR-ATVA11}, a related technique is used: a hierarchy on different levels of abstraction is considered, which allows to use analysis done on coarser abstractions to reduce the state space to be explored for more precise abstractions.



\vspace{-5pt}

\paragraph{{\bf Structure of the paper}} In section~\ref{lts}, we
define a notion of {\em labeled transition systems} that serves as the
underlying formalism for defining the semantics of the two-player
safety games. In the same section we define imperfect information
games and show the reduction of~\cite{reif84} of these games to
the games with complete information. 
Then in section~\ref{tga} we define {\em timed game automata}, that we use as a modeling
formalism. In section~\ref{problemstatement}, we state
the cost-optimal controller synthesis problem and show that a natural
extension of this problem (that considers a simple infinite set of
observation predicates) is undecidable.  In
section~\ref{lattice-based}, we propose an algorithm and in section~\ref{casestudies}, we present two case studies.

\vspace{-10pt}
\section{Games with Incomplete Information}
\label{lts}






\subsection{Labeled Transition Systems}

\begin{definition}[Labeled Transition System]\label{def-lts}
  A \emph{Labeled Transition System} (LTS) $A$ is a tuple $(S,
  s_{init} , \Sigma, \rightarrow)$ where:
\begin{itemize}
\item $S$ is a (possibly infinite) set of states,
\item $s_{init} \in S$ is the initial state,
\item $\Sigma$ is the set of actions,
\item $\rightarrow \subseteq S \times \Sigma \times S$ is a transition
  relation, we write $s_1 \xrightarrow{a} s_2$ if $(s_1,
  a, s_2) \in \rightarrow$. 
\end{itemize}
W.l.o.g. we assume that a transition relation is total, i.e. for all
states $s \in S$ and actions $a \in \Sigma$, there exists $s' \in S$
such that $s \xrightarrow{a} s'$.
\end{definition}


A \emph{run} of a LTS is a finite or infinite sequence of states
$r=(s_0, s_1, \dots, s_n, \dots)$ such that 
$s_i \xrightarrow{a_i} s_{i+1}$ for some action $a_i \in \Sigma$.
$r^i$ denotes the prefix run of $r$ ending at $s_i$. We denote
by $Runs(A)$ the set of all finite runs of the LTS $A$ and by
$Runs^{\omega}(A)$ the set of all infinite runs of the LTS $A$.

A \emph{state predicate} is a characteristic function $\varphi: S
\rightarrow \{0,1\}$.  We write $s \models \varphi$ iff
$\varphi(s)=1$.

We use LTS as arenas for games: at each round of the game Player I
(Controller) chooses an action $a \in \Sigma$, and Player II (Environment) resolves the
nondeterminism by choosing a transition labeled with $a$. Starting
from the state $s_{init}$, the two players play for an infinite number
of rounds, and this interaction produces an infinite run that we call
the {\em outcome} of the game. The objective of Player I is to keep
the game in states that satisfy a state predicate $\varphi$, this
predicate typically models the safe states of the system.

More formally, 
Player I plays according to a strategy $\lambda$ (of Player I)
which is a mapping from the set of finite runs to the set of actions,
i.e. $\lambda : Runs(A) \rightarrow \Sigma$. 
We say that an infinite run $r = (s_0,s_1, s_2, \dots, s_n, \dots) \in
Runs^{\omega}(A)$ is {\em consistent} with the strategy $\lambda$, if
for all $0 \leq i$, there exists a transition $s_i
\xrightarrow{\lambda(r^i)} s_{i+1}$. We denote by ${\sf
  Outcome}(A,\lambda)$ all the infinite runs in $A$ that are 
consistent with $\lambda$ and start in $s_{init}$.  An infinite run
$(s_0, s_1,\dots,s_n,\dots)$ {\em satisfies} a state predicate
$\varphi$ if for all $i \geq 0$, $ s_i \models \varphi$.  A (perfect information)
\emph{safety game} between Player I and Player II is defined by a pair
$(A, \varphi)$, where $A$ is an LTS and $\varphi$ is a state predicate
that we call a \emph{safety state predicate}.  The {\em safety game
  problem} asks to determine, given a game $(A, \varphi)$, if there
exists a strategy $\lambda$ for Player I such that all the infinite
runs in ${\sf Outcome}(A,\lambda)$ satisfy $\varphi$.

\subsection{Observation-Based Stuttering-Invariant Strategies}


In the imperfect information setting, Player I observes the state of
the game using a set of {\em observable predicates} $obs =
\{\varphi_1, \varphi_2, \dots, \varphi_m \}$.  An \emph{observation}
is a valuation for the predicates in $obs$, i.e. in a state $s$,
Player I is given the subset of observable predicates that are
satisfied in that
state. 
This is defined by the function $\gamma_{obs}$:
$$\gamma_{obs} (s) \equiv \{\varphi \in obs\ \mid \ s \models \varphi \}$$

We extend the function $\gamma_{obs}$ to sets of states that satisfy the same set of observation predicates.
So, if all the elements of some set of states $v \subseteq S$ satisfy the same set of observable predicates $o$ (i.e. $\forall s \in v \cdot \gamma_{obs}(s) = o$), then we let $\gamma_{obs}(v)=o$.

In a game with imperfect information, Player I has to play according to {\em observation
  based stuttering invariant strategies} (OBSI strategies for
short). Initially, and whenever the current observation of the system
state changes, Player I proposes some action $a \in \Sigma$ and this
intuitively means that he wants to play the action $a$ whenever this
action is enabled in the system. Player I is not allowed to change
his choice as long as the current observation remains the same. 

An \emph{Imperfect Information Safety Game} (IISG) is defined by a triple $(A, \varphi, obs)$.




Consider a run $r = (s_0, s_1, \dots, s_n)$, and its prefix $r'$
that contains all the elements but the last one (i.e. $r = r' \cdot
s_n$).  A \emph{stuttering-free} projection $r \downarrow obs$ of a
run $r$ over a set of predicates $obs$ is a sequence, defined by the
following inductive rules:
\begin{itemize}
\item if $r$ is a singleton (i.e. $n = 0$), then $r \downarrow obs =
  \gamma_{obs}(s_0)$
 \item else if $n>0$ and $\gamma_{obs}(s_{n-1}) = \gamma_{obs}(s_n)$,
   then $r \downarrow obs = r' \downarrow obs$
 \item else if $n>0$ and $\gamma_{obs}(s_{n-1}) \neq
   \gamma_{obs}(s_n)$, then $r \downarrow obs = r' \downarrow obs
   \cdot \gamma_{obs}(s_n)$
\end{itemize}

\begin{definition}\cite{cdllr07}
A strategy $\lambda$ is called $obs$-Observation Based
Stuttering Invariant ($obs$-OBSI) if for any two runs $r'$ and $r''$
such that $r' \downarrow obs = r'' \downarrow obs$, 
the values of $\lambda$ on $r'$ and $r''$ coincide, i.e. $\lambda(r') = \lambda(r'')$.
\end{definition}

We say that Player I wins in IISG $(A, \varphi,obs)$, if there exists a $obs$-OBSI
strategy $\lambda$ for Player I such that all the infinite runs in
${\sf Outcome}(A,\lambda)$ satisfy $\varphi$.

\subsection{Knowledge Games}

The solution of a IISG $(A, \varphi, obs)$
can be reduced to the solution of a \emph{perfect information} safety
game $(G, \psi)$, whose states are sets of states in $A$ and represent the \emph{knowledge} (beliefs) of Player I about the current possible states of $A$. 


We assume that $\varphi \in obs$, i.e. the safety state
predicate is observable for Player I.  This is a reasonable assumption
since Player I should be able to know whether he loses the game or
not.

Consider an LTS $A=(S, s_{init} , \Sigma, \rightarrow)$.  We say that
a transition $s_1 \xrightarrow{a} s_2$ in $A$ is $obs$-visible, if the
states $s_1$ and $s_2$ have different observations (i.e.
$\gamma_{obs}(s_1) \neq \gamma_{obs}(s_2)$), otherwise we call this
transition to be $obs$-invisible.  Let $v\subseteq S$ be \emph{a knowledge} (belief) of Player I in $A$, i.e. it is some set of
states that satisfy the same observation.  The set $Post_{obs}(v, a)$
contains all the states that are accessible from the states of $v$ by
a finite sequence of $a$-labeled $obs$-invisible transitions followed
by an $a$-labeled $obs$-visible transition.  More formally,
$Post_{obs}(v, a)$ contains all the states $s'$, such that there
exists a run $s_1 \xrightarrow{a} s_2 \xrightarrow{a} \dots
\xrightarrow{a} s_n$ and $s_1 \in v$, $s_n = s'$, $\gamma_{obs}(s_i) =
\gamma_{obs}(s)$ for all $1 \leq i < n$, and $\gamma_{obs}(s_n) \neq
\gamma_{obs}(s)$.
 
The set $Post_{obs}(v, a)$ contains all the states that are visible
for Player I after he continuously offers to play action $a$ from
some state in $v$. Player I can distinguish the states $s_1$ and $s_2$
from $Post_{obs}(v, a)$ iff they have different observations,
i.e. $\gamma_{obs}(s_1) \neq \gamma_{obs}(s_2)$.  In other words, the
set $\{Post_{obs}(v, a) \cap \gamma_{obs}^{-1}(o) \mid o \in {\mathcal
  P}(obs)\} \setminus \{\varnothing\}$ consists of all the beliefs
that Player I might have after he plays the $a$ action from the
knowledge set $v$\footnote{the powerset ${\mathcal P}(S)$ is equal to
  the set of all subsets of $S$}.

A game can \emph{diverge} in the current observation
after playing some action.  To capture this we define the boolean
function $Sink_{obs}(v, a)$ whose value is true iff there exists an
infinite run $(s_0, s_1, \dots, s_n, \dots ) \in Runs(A)$ such that
$s_0 \in v$ and for each $i \geq 0$ we have $s_i \xrightarrow{a}
s_{i+1}$ and $\gamma_{obs}(s_i) = \gamma_{obs}(s_0)$.


\begin{definition}
We say, that a game $(G, \psi)$ is the knowledge game for $(A, \varphi,
obs)$, if $G=(V, v_{init} , \Sigma, \rightarrow_g)$ is an LTS and
\begin{itemize}
\item $V=\{v \in {\mathcal P}(S) \mid \forall s_1, s_2 \in v \cdot \gamma_{obs}(s_1) = \gamma_{obs}(s_2)\} \setminus \{\varnothing\}$ is the set of all the beliefs of Player I in $A$,
\item $v_{init} = \{s_{init}\}$ is the initial game state,
\item $\rightarrow_g$ represents the game transition relation; a
   transition $v_1 \xrightarrow{a}_g v_2$ exists iff:
   \begin{itemize} 
    \item $v_2 = Post_{obs}(v_1, a) \cap \gamma_{obs}^{-1}(o)$ and $v_2 \neq \varnothing$ for some $o \subseteq obs$, or
    \item $Sink_{obs}(v_1, a)$ is true and $v_2 = v_1$.
   \end{itemize}
 \item $v \models \psi$ iff $\varphi \in \gamma_{obs}(v)$.
\end{itemize}
\end{definition}

\begin{theorem}[\cite{cdllr07}]
  Player I wins in a IISG $(A, \varphi, obs)$ iff he
  has a winning strategy in the safety game $(G, \psi)$ which is the knowledge game for $(A, \varphi, obs)$.
\end{theorem}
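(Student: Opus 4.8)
}

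The plan is to prove the two implications by building, in each direction, a correspondence between the plays of the two games and between $obs$-OBSI strategies of Player~I in the IISG and arbitrary strategies of Player~I in the safety game $(G,\psi)$. The guiding principle is the classical belief/subset construction of~\cite{reif84}: an $obs$-OBSI strategy depends on a finite run $r$ only through its stuttering-free projection $r\downarrow obs$, and this projection together with the actions committed to so far pins down Player~I's current belief, so that every relevant play of $A$ folds into a unique play of $G$ and conversely. One elementary fact is used repeatedly: since $\varphi\in obs$, the safety of a state depends only on its observation, hence for a belief $v\in V$ we have $v\models\psi$ (i.e. $\varphi\in\gamma_{obs}(v)$) if and only if \emph{every} state of $v$ satisfies $\varphi$; in particular $v_{init}$ must satisfy $\psi$ whenever Player~I wins $(G,\psi)$.

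I would first treat the direction ``$(G,\psi)$ is won by Player~I $\Rightarrow$ $(A,\varphi,obs)$ is won by Player~I''. Fix a winning strategy $\lambda_G$ in $(G,\psi)$. Given a finite run $r$ of $A$, an induction on the number of observation changes in $r$ shows that there is a unique run $\rho_r$ of $G$, consistent with $\lambda_G$, starting in $v_{init}=\{s_{init}\}$, whose successive observations are those of $r\downarrow obs$, and whose last belief contains the last state of $r$ (no $Sink$ self-loop is needed here, each observation change of $r$ being mirrored by exactly one $Post_{obs}$-transition of $G$); set $\lambda(r)=\lambda_G(\rho_r)$. Then $\lambda$ factors through $r\downarrow obs$, hence is $obs$-OBSI. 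To see it is winning, take $r\in{\sf Outcome}(A,\lambda)$. If $r\downarrow obs$ is infinite, the $\lambda_G$-consistent run of $G$ it induces visits beliefs $v_0,v_1,\dots$, all satisfying $\psi$; since every state of $r$ lies in a segment carrying the observation $\gamma_{obs}(v_j)$ of some visited belief, every state of $r$ is safe. If $r\downarrow obs$ is finite, from some point on $r$ stays in one observation $o$; the induced $G$-run ends in a belief $v$ with $\gamma_{obs}(v)=o$, and since $Sink_{obs}(v,\lambda_G(\cdot))$ then holds, looping in $v$ is $\lambda_G$-consistent, so $v\models\psi$ and $\varphi\in o$; together with the safety of the earlier observations this gives that every state of $r$ is safe. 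In both cases $r$ satisfies $\varphi$.

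For the converse, fix an $obs$-OBSI winning strategy $\lambda$ in $(A,\varphi,obs)$. For a finite run $\rho$ of $G$, call a finite run $r$ of $A$ \emph{compatible} with $\rho$ if $r$ is consistent with $\lambda$, starts in $s_{init}$, ends in a state of the last belief of $\rho$, and its observation changes are exactly the $Post_{obs}$-transitions of $\rho$; set $\lambda_G(\rho)=\lambda(r)$ for any such $r$. This is well defined because any two compatible runs share the same projection, on which the $obs$-OBSI strategy $\lambda$ is constant. The existence of a compatible $r$ is the crux, proved as a ``reachability of beliefs'' lemma by induction on the length of $\rho$ along $\lambda_G$-consistent runs: each such belief is non-empty and every one of its states is reached by a $\lambda$-consistent run with the prescribed observation sequence. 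The inductive step unfolds $\rightarrow_g$: a transition to $Post_{obs}(v,a)\cap\gamma_{obs}^{-1}(o)$ with $a=\lambda_G(\rho')$ is realised by appending to the run of the induction hypothesis a maximal block of $obs$-invisible $a$-transitions followed by one visible $a$-transition, and a $Sink_{obs}(v,a)$ self-loop is realised by moving inside the current observation; it is precisely here that the $obs$-OBSI property is needed, since it forces $\lambda$ to prescribe the committed action $a$ all along the invisible block, making the appended transitions $\lambda$-consistent. Finally $\lambda_G$ is winning: if some $\lambda_G$-consistent outcome $\rho=v_0v_1\cdots$ had $v_j\not\models\psi$, then some $s\in v_j$ would be unsafe; by the lemma a $\lambda$-consistent run reaches $s$, and since $\rightarrow$ is total it extends to an infinite $\lambda$-consistent outcome visiting $s$, contradicting that $\lambda$ wins the IISG.

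The step I expect to be the main obstacle is the bookkeeping around stuttering: reconciling the stuttering-free projections $r\downarrow obs$ that define $obs$-OBSI strategies with the $Sink$ self-loops that $G$ uses to record divergence within an observation, and making the correspondence between $G$-runs and observation sequences of $A$-runs fully precise (including the immediate-divergence base case). Tightly linked to this is the commitment discipline --- Player~I may not change his action while the observation is unchanged --- which is exactly what forces the use of the $obs$-OBSI hypothesis and of the $Post_{obs}/Sink_{obs}$ definitions in the reachability lemma; a naive subset construction re-examining the action at every micro-step would be unsound. The remaining ingredients (totality of $\rightarrow$, totality of $\lambda$, and the observation-determinacy of safety coming from $\varphi\in obs$) are routine.
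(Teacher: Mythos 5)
This theorem is stated in the paper as an imported result from~\cite{cdllr07}; the paper itself contains no proof of it, so there is nothing to compare your argument against line by line. Your plan is the standard belief/subset construction and, as a plan, it is sound: the translation of a $(G,\psi)$-strategy into an $obs$-OBSI strategy via the stuttering-free projection, the use of the $Sink$ self-loop to handle divergence within an observation, the well-definedness of $\lambda_G$ from OBSI-invariance, and the reachability-of-beliefs lemma (whose inductive step is exactly where the commitment discipline and OBSI-ness are needed) are the right ingredients, and you correctly isolate them. One claim in your first direction is stated too strongly: the last belief of $\rho_r$ does \emph{not} in general contain the last state of $r$, because a belief $v_{j+1}=Post_{obs}(v_j,a)\cap\gamma_{obs}^{-1}(o)$ contains only the states reached immediately after the $obs$-visible $a$-transition, not states reached by further $obs$-invisible steps inside the new observation block. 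This is harmless where you use it --- your winning argument only needs that every state of $r$ carries the observation $\gamma_{obs}(v_j)$ of some visited belief, and the witness for $Sink_{obs}(v,a)$ can be taken to start at the first state of the final block, which \emph{is} in $v$ --- but the invariant should be phrased as ``the last belief contains the entry state of the current observation block of $r$.'' Similarly, in the converse direction the appended block of invisible $a$-transitions need not be \emph{maximal}; $Post_{obs}$ quantifies over all finite invisible prefixes followed by one visible step. With these two adjustments the proof goes through.
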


This theorem gives us the algorithm of solution of a IISG for the case when the knowledge games for it is finite and can be automatically constructed.

\section{Timed Game Automata}\label{tga}

The knowledge game $(G, \psi)$ for $(A, \varphi, obs)$ is finite when
the source game $A$ is finite~\cite{reif84}.  The converse is not true
and there are higher level formalisms that can induce \emph{infinite}
games for which knowledge games are still \emph{finite} and can be automatically constructed.  One of such
formalisms is Timed Game Automata~\cite{Maler95onthe}, that we use as
a modeling formalism and that has been proved in~\cite{cdllr07} to
have finite state knowledge games.




Let $X$ be a finite set of real-valued variables called clocks.  We
denote by ${\cal C}(X)$ the set of constraints $\psi$ generated by the
grammar: $\psi ::= x \sim k \mid x - y \sim k \mid \psi \wedge
\psi$ where $k \in \bbbn$, $x,y \in X$ and $\sim \in
\{<,\leq,=,>,\geq\}$.  ${\cal B}(X)$ is the set of constraints
generated by the following grammar: $\psi ::= \top \mid k_1 \leq x
< k_2 \mid \psi \wedge \psi$ where $k, k_1, k_2 \in \bbbn$, $k_1
< k_2$, $x \in X$, and $\top$ is the boolean constant {\em true}.

A \emph{valuation} of the clocks in $X$ is a mapping $X\mapsto
\setR_{\geq 0}$.  For $Y \subseteq X$, we denote by $v[Y]$ the
valuation assigning $0$ (respectively, $v(x)$) for any $x\in Y$
(respectively, $x\in X\setminus Y$).  We also use the notation
$\vec{0}$ for the valuation that assigns $0$ to each clock from $X$.

\begin{definition}[Timed Game Automata]\label{def-tgs} A \emph{Timed Game Automaton} (TGA) is a tuple
  $(L, l_{init} , X, E, \Sigma_c, \Sigma_u, I)$ where:
\begin{itemize}
\item $L$ is a finite set of locations,
\item $l_{init} \in L$ is the initial location,
\item $X$ is a finite set of real-valued clocks,
\item $\Sigma_c$ and $\Sigma_u$ are finite the sets of controllable and
  uncontrollable actions (of Player I and Player II, correspondingly),
\item $E \subseteq (L \times {\cal B}(X) \times \Sigma_c \times 2^X
  \times L) \cup (L \times {\cal C}(X) \times \Sigma_u \times 2^X
  \times L)$ is partitioned into controllable and uncontrollable
  transitions\footnote{We follow the definition of \cite{cdllr07} that
    also assumes that the guards of the controllable transitions
    should be of the form $k_1 \leq x < k_2$.  This allows us to use
    the results from that paper.  In particular, we use \emph{urgent}
    semantics for the controllable transitions, i.e. for any
    controllable transition there is an exact moment in time when it
    becomes enabled.},
\item $I : L \rightarrow {\cal B}(X)$ associates to each location its
  \emph{invariant}.
\end{itemize}
\end{definition}

We first briefly recall \emph{the non-game} semantics of TGA, that is
the semantics of Timed Automata (TA)~\cite{Alur94atheory}.  A state of
TA (and TGA) is a pair $(l, v)$ of a location $l \in L$ and a
valuation $v$ over the clocks in $X$.  An automaton can do two types
of transitions, that are defined by the relation $\hookrightarrow$:
\begin{itemize}
\item a {\textbf{delay}} $(l,v) \xhookrightarrow{t} (l,v')$ for some
  $t \in \setR_{> 0}$, $v'=v+t$ and $v' \models I(l)$, i.e. to stay in
  the same location while the invariant of this location is satisfied,
  and during this delay all the clocks grow with the same rate, and
 \item a  {\textbf{discrete transition}} $(l,v) \xhookrightarrow{a} (l',v')$ if there is an element $(l, g, a, Y, l') \in E$, $v \models g$ and $v' = v[Y]$, i.e. to go to another location $l'$ with resetting the clocks from $Y$, if the guard $g$ and the invariant of the target location $l'$ are satisfied.
\end{itemize}



In the remainder of this section, we define the \emph{game semantics}
of TGA.  As in \cite{cdllr07}, for TGA, we let observable predicates be of the
form $(K, \psi)$, where $K \subseteq L$ and $\psi \in {\cal
  B}(X)$. 
We say that a state $(l,v)$ satisfies $(K,\psi)$ iff $l \in K$ and $v
\models \psi$. 

Intuitively, whenever the current
observation of the system state changes, Player I proposes a
controllable action $a \in \Sigma_c$ and as long as
the observation does not change Player II has to play this action when it is enabled, and otherwise he can play any uncontrollable actions or do time delay.
Player I can also propose a special action {\bf skip}, that means that he lets Player II play any uncontrollable actions and do time delay.
Any time delay should be stopped as soon as the current observation is changed, thus giving a possibility for Player I to choose another action to play.

Formally, the semantics of TGA is defined by the following definition:
\begin{definition}
The semantics of  TGA $(L, l_{init} , X, E, \Sigma_c, \Sigma_u, I)$ with the set of observable predicates $obs$ is defined as the
LTS $(S, s_{init} , \Sigma_c\cup \{{\bf skip}\}, \rightarrow)$, where  
 $S=L \times \setR_{\geq 0}^X$, $s_{init} = (l_{init}, \vec{0})$ and the transition relation is:
 ($\hookrightarrow$ denotes the non-game semantics of $M$)
\begin{itemize} 
\item  $s \xrightarrow{{\bf skip}} s'$ exists, iff $s \xhookrightarrow{a_u} s'$ for
  some $a_u \in \Sigma_u$, or  there exists a delay $s \xhookrightarrow{t} s'$ for some $t \in \setR_{> 0}$ and any smaller delay doesn't change the current observation (i.e. if $s \xhookrightarrow{t'} s''$ and $0 \leq t'<t$ then $\gamma_{obs}(s) = \gamma_{obs}(s'')$).
\item for $a\in \Sigma_c$, $s \xrightarrow{a} s'$ exists, iff:
\begin{itemize}
\item $a$ is enabled in $s$ and there exists a discrete transition $s \xhookrightarrow{a} s'$, or
\item $a$ is not enabled in $s$, but there exists a discrete transition $s \xhookrightarrow{a_u} s'$ for
  some $a_u \in \Sigma_u$, or
\item there exists a delay $s \xhookrightarrow{t} s'$ for some $t \in \setR_{> 0}$, and for
  any smaller delay $s \xhookrightarrow{t'} s''$ (where $0 \leq t'<t$) the observation is not changed, i.e. $\gamma_{obs}(s)=\gamma_{obs}(s'')$, and action $a$ is not enabled in $s''$.
\end{itemize}
\end{itemize}
\end{definition}
 

For a given TGA $M$, set of observable predicates $obs$ and a safety
state-predicate $\varphi$ (that can be again of the form $(K, \psi)$),
we say that Player I wins in the Imperfect Information Safety Timed Game (IISTG) $(M, \varphi, obs)$ iff he
wins in the IISG $(A, \varphi, obs)$, where $A$ defines the semantics for $M$ and $obs$.

The problem of solution of IISTG is decidable since the knowledge games are finite for
TGA~\cite{cdllr07}.
The paper \cite{cdllr07} proposes a symbolic Difference Bounded Matrices (DBM)-based procedure to construct them.  

\section{Problem Statement}\label{problemstatement}

Consider that several observable predicates are available, with assigned costs, and we
look for a set of observable predicates allowing controllability
and whose cost is minimal. This is formalized in the next definition:


\begin{definition}
Consider a TGA $M$, a finite set of \emph{available} observable predicates 
$Obs$ over $M$, a safety observable  predicate $\varphi \in Obs$ and
a monotonic with respect to set inclusion function $\omega:{\cal P}(Obs)\rightarrow \mathbb{R}_{\geq 0}$.
The optimization problem for $(M,\varphi,Obs,\omega)$ consists in computing 
a set of observable predicates $obs \subseteq Obs$ such that
Player I wins in the Imperfect Information Safety Timed Game $(M, \varphi, obs)$
and $\omega(obs)$ is minimal.
\end{definition}

We present in the next section our algorithm to compute a solution to the 
optimization problem. In this paper, we restrict our attention to \emph{finite} sets of 
available predicates. We justify this restriction by the following undecidability result:
 considering a reasonable infinite set of observation predicates, the easier problem
of the existence of a set of predicates allowing controllability is undecidable (the proof is given in Appendix~\ref{app:undec})
:

\newcounter{thm-UNDEC}
\setcounter{thm-UNDEC}{\value{theorem}}
\begin{theorem}\label{thm:undec}
Consider a TGA $M$ with clocks $X$, and an (infinite) set of available predicates
$Obs = \{x < \frac{1}q \mid x\in X, q\in \mathbb{N}, q \geq 1 \}$ and the safety objective $\varphi$. Determining whether there 
exists a finite set of predicates $obs \subset Obs$ such that Player I wins in IISTG $(M, \varphi, obs)$ is undecidable.
\end{theorem}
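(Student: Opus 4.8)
The plan is to reduce from the halting problem for two-counter (Minsky) machines. Note first that the property at stake is recursively enumerable: the finite subsets of $Obs$ can be effectively enumerated, and for each finite $obs$ the game $(M,\varphi,obs)$ is decidable since its knowledge game is finite, so one can semi-decide the existence of a winning $obs$. We therefore aim, given a two-counter machine $\mathcal{N}$, to construct an instance $(M,\varphi,Obs)$ such that \emph{some finite $obs\subset Obs$ is winning for Player~I if and only if $\mathcal{N}$ halts}. Adding a fresh ``step counter'' incremented by every instruction, and putting $\mathcal{N}$ in a normal form in which the instruction pointer can be read off the counter values, we may assume that a run of $\mathcal{N}$ either reaches a halting instruction after $T$ steps, or runs forever with the step counter growing without bound.

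The driving observation is that a finite $obs$ gives Player~I only a bounded amount of precision near $0$: for a clock $x$, a finite $obs$ amounts to finitely many thresholds $1/q_1>\dots>1/q_n$, and every value below $1/q_n$ falls into the single interval $[0,1/q_n)$ and is thereby invisible. I encode the configuration of $\mathcal{N}$ into the clocks so that each counter value $c$ contributes a factor $2^{-c}$, each clock being moreover tied to the step counter (for instance $x_i = 1/(2^{c_3} p_i^{c_i})$ with $p_1,p_2$ distinct odd primes), so that: (i) Player~I can reconstruct the exact configuration as long as the simulation has made at most some number $m$ of steps, where $m$ grows with $obs$; and (ii) once the step counter exceeds $m$, all clocks stay below $1/q_n$ forever after, so the observation freezes. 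In the arena $M$, Player~II (who has perfect information) simulates $\mathcal{N}$ step by step using uncontrollable transitions and delays, setting the new clock values as she goes, and at each step Player~I may either let the step pass or raise a \emph{challenge}. The safety predicate $\varphi=(K,\psi)$ is engineered so that reaching a halting instruction leads to a sink safe for Player~I; a challenge is resolved by a short module won by Player~I exactly when the challenged step is inconsistent with the semantics of $\mathcal{N}$, a won challenge leading to a safe sink and a lost one to an unsafe sink; and whenever Player~II takes an inconsistent step, she can thereafter steer the play to an unsafe sink unless that step is correctly challenged.

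Correctness then follows by a now-standard argument. While the simulation has made at most $m$ steps, Player~I observes every configuration exactly, hence detects and correctly challenges any inconsistent step, so Player~II is forced to simulate $\mathcal{N}$ faithfully in that regime. If $\mathcal{N}$ halts in $T$ steps, then for any $obs$ with $m\ge T$ the faithful simulation reaches the halting sink while still observable, so Player~I wins. If instead the simulation runs past step $m$ --- in particular whenever $\mathcal{N}$ does not halt --- then once the step counter exceeds $m$ the observation has frozen, so by stuttering invariance Player~I's strategy is pinned to a single decision over the whole frozen suffix; Player~II can then make an inconsistent move at a step where Player~I does not challenge, or be honest at a step where he does, and in either case steer the play to an unsafe sink. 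Hence a finite winning $obs$ exists iff $\mathcal{N}$ halts, which is undecidable.

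The main obstacle is the design of the timed-game gadgets realising the counter operations --- scaling a clock by an integer constant, copying, zero-testing --- and the adjudication module, so that Player~II is forced to execute each step honestly (or lose) within the syntactic restrictions on TGA (natural-number constants; guards of the form $k_1\le x<k_2$ and urgency for controllable transitions), and so that Player~I gains nothing from a spurious challenge; one must also choose the observable location predicates coarsely enough that Player~I cannot simply count steps, yet finely enough that he can still monitor the counters while they are small. Such gadgets are in the spirit of existing undecidability constructions for timed games; granting them, the equivalence with halting and the freezing argument above are routine.
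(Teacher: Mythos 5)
Your overall strategy is legitimate in principle (both the target property and your source problem are $\Sigma_1$, so a many-one reduction from a halting-type problem would suffice), and it differs genuinely from the paper's: the paper reduces from \emph{boundedness} of two-counter machines and imports wholesale the construction of a TGA from~\cite{CHR02}, whose key correctness property --- the counters stay below $k$ iff location $\textsf{over}$ is unreachable in the $\frac{1}{k}$-sampled semantics --- is already established there. The only new ingredients in the paper are a fresh clock $z$ that is reset on every transition, the observable predicates $(L, z<\frac{1}{q})$, and an lcm argument showing that any finite observation set confines the controller to a $\frac{1}{k}$-sampled behaviour. You instead propose a reduction from halting built on a fresh simulation-plus-challenge arena. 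That route could in principle work, but as written it has a genuine gap.

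The gap is that the entire gadget layer --- which is the technical heart of any such reduction --- is asserted rather than constructed, and the two properties your correctness argument rests on are exactly the ones that depend on those gadgets in a non-routine way. First, maintaining clock values of the form $x_i = 1/(2^{c_3} p_i^{c_i})$ requires multiplication and division of clock values by integer constants, copying, and zero-testing, all within guards over natural-number constants of the form $k_1 \leq x < k_2$ with urgent controllable edges; such ``wrapping'' gadgets classically rely on one player punishing the other's nondeterministic choices, and here the punishing player (Player~I) is precisely the one who cannot observe the small clock values in question, so ``Player~II is forced to execute each step honestly or lose'' is not something you can grant yourself. Second, the ``freezing'' claim is not automatic: clocks in a TGA grow during delays, so ``once the step counter exceeds $m$, all clocks stay below $1/q_n$ forever'' requires the suffix of the play to be engineered so that no clock ever crosses the finest available threshold (e.g.\ by forcing a Zeno, geometrically shrinking schedule of resets); without that, the observation does \emph{not} freeze and the stuttering-invariance argument pinning Player~I to a single decision collapses. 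Similarly, ``Player~I reconstructs the exact configuration while at most $m$ steps have elapsed'' must be shown against observations that change continuously as clocks cross thresholds during delays, not against a snapshot. The paper avoids all of this by making the only temporally informative predicate $z<\frac{1}{q}$ refer to the time elapsed since the last transition, which directly bounds the controller's temporal resolution by $\frac{1}{k}$ and reduces correctness to the already-proven sampled-semantics property of~\cite{CHR02}. (A smaller point, which the paper's own proof also glosses over: the theorem's $Obs$ contains only clock-threshold predicates, whereas your argument, like the paper's, needs location information --- at least the safety predicate --- to be observable.)
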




\section{The Algorithm}\label{lattice-based}

The naive algorithm 
is to iterate through all the possible solutions
${\mathcal P}(Obs)$, for each $obs \in {\mathcal P}(Obs)$ solve IISTG $(M, \varphi, obs)$ via the reduction
to the finite-state knowledge games, and finally pick a solution with
the minimal cost.


In section~\ref{basic_algorithm} we propose the more efficient algorithm that avoids exploring  all the possible solutions from ${\mathcal P}(Obs)$.
Additionally, in sections~\ref{space_reusage} we describe the optimization that reuses the information between different iterations.

\vspace{-10pt}

\subsection{Basic Exploration Algorithm}\label{basic_algorithm}

Consider, that we already solved the game for the observable
predicates sets $obs_1, obs_2, \dots, obs_n$ and obtained the results
$r_1, r_2, \dots, r_n$, where $r_i$ is either $true$ or $false$,
depending on whether Player I wins in IISTG $(M, \varphi, obs_i)$
or not.

From now on we don't have to consider any set of observable predicates
with a cost larger or equal to the cost of the optimal solution found
so far. 
Additionally, if we know, that Player I loses for the set of
observable predicates $obs_i$ (i.e. $r_i = false$), then we can
conclude that he also loses for any coarser set of observable predicates $obs \subset obs_i$ (since in this case Player I has less
observation power). Therefore we don't have to consider such $obs$ as a
solution to our optimization problem.
This can be formalized by the following definition:
\begin{definition}
  A sequence $(obs_1, r_1), (obs_2, r_2) \dots (obs_n, r_n)$ is called
  a non-redundant sequence of solutions for a set of available
  observable predicates $Obs$ and cost function $\omega$, if for any
  $1 \leq i \leq n$ we have $obs_i \subseteq Obs$, $r_i \in \{true,
  false\}$, and for any $j<i$ we have:
\begin{itemize}
 \item $\omega(obs_j) > \omega(obs_i)$ if $r_j = true$,
 \item $obs_i \not \subseteq obs_j$, otherwise.
\end{itemize}

\end{definition}

\begin{algorithm}[t]
\begin{small}
//input: TGA $M$, a set of observable predicates $Obs$, a safety predicate $\varphi$ \\
//output: a solution with a minimal cost \\
\textbf{function} $Optimize(M, \varphi, Obs, \omega)$: \\
1. \verb|  | $candidates := {\mathcal P}(Obs)$ // initially, $candidates$ contains all subsets of $Obs$ \\
2. \verb|  | $best\_candidate := None$  \\
3. \verb|  | \textbf{while} $candidates \neq \varnothing$: \\
4. \verb|  | \verb|  | \textbf{pick} $obs \in candidates$ \\
5. \verb|  | \verb|  | \textbf{if} $Solve(M, \varphi, obs)$: \\
6. \verb|  | \verb|  | \verb|  | $best\_candidate := obs$ \\
7. \verb|  | \verb|  | \verb|  | $candidates = candidates \setminus \{c : c \in {\mathcal P}(Obs) \land \omega(c) \geq \omega(obs)\}$ \\
8. \verb|  | \verb|  | \textbf{else}: \\
9. \verb|  | \verb|  | \verb|  | $candidates = candidates \setminus \{c : c \in {\mathcal P}(Obs) \land c \subseteq obs\}$ \\
10. \verb|  | \textbf{return} $best\_candidate$ \\
\end{small}
\caption{Lattice-based algorithm}
\label{alg:basiclatticealgorithm}
\end{algorithm}

Algorithm \ref{alg:basiclatticealgorithm} solves the optimization problem by iteratively solving the game for different sets of observable predicates so that the resulting sequence of solutions is non-redundant.
The procedure $Solve(M, \varphi, obs)$ uses the knowledge game-reduction technique described in section~\ref{lts}.
The algorithm updates the set $candidates$ after each
iteration and when the algorithm finishes, the $best\_candidate$
variable contains a reference to the solution with the minimal cost.

Algorithm~\ref{alg:basiclatticealgorithm} doesn't state, in which
order we should navigate through the set of candidates.  We propose
the following heuristics:
\begin{itemize}
\item \emph{cheap first} (and \emph{expensive first}) --- pick any element from the $candidates$ with the
  maximal (or minimal) cost,
\item \emph{random} --- pick a random element from the $candidates$,
\item \emph{midpoint} --- pick any element, that will allow us to
  eliminate as many elements from the $candidates$ set as it is
  possible. In other words, we pick an element that maximizes the
  value of \newline $\min(|\{c : c \in candidates \land w(c) \geq w(obs)\}|,
  |\{c : c \in candidates \land c \subseteq obs\}|)$.
\end{itemize}

Algorithm~\ref{alg:basiclatticealgorithm} doesn't specify how we
store the set of possible solutions $candidates$.  An explicit way
(i.e. store all elements) is expensive, because the $candidates$
set initially contains $2^{|Obs|}$ elements.  However, an efficient procedure 
for obtaining a next candidate may not exist as a consequence of the
following theorem that is proved in the Appendix \ref{theorem_2}:

\newcounter{thm-NP}
\setcounter{thm-NP}{\value{theorem}}
\begin{theorem}\label{np_completeness_theorem}
  Let $seq_n=(obs_1, r_1), (obs_2, r_2), \dots ,(obs_n, r_n)$ be a non-redundant sequence of solutions for some set $Obs$ and cost function $\omega : {\mathcal P}(Obs) \rightarrow \mathbb{R}_{\geq 0}$.
  Consider that the value of $\omega$ can be computed in polynomial time.  
  Then the problem of determining whether there exists a one-element extension \newline $seq_{n+1} = (obs_1, r_1), (obs_2, r_2), \dots ,(obs_n, r_n), (obs_{n+1}, r_{n+1})$ of $seq$ that is still non-redundant for $Obs$ and $\omega$ is NP-complete.
\end{theorem}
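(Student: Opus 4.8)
The plan is to show membership in NP first and then NP-hardness by reduction from a suitable covering/independence problem. For membership, note that a one-element extension is described by a pair $(obs_{n+1},r_{n+1})$ with $obs_{n+1}\subseteq Obs$ and $r_{n+1}\in\{true,false\}$; this is a polynomial-size certificate (a subset of the polynomially many predicates plus one bit). Given such a certificate, checking non-redundancy of $seq_{n+1}$ amounts to: (i) rechecking that $seq_n$ itself is non-redundant (it is given, so this is free, or can be verified in polynomial time by inspecting all $\binom{n}{2}$ pairs), and (ii) for each $j\le n$, checking the single clause that applies — if $r_j=true$ verify $\omega(obs_j)>\omega(obs_{n+1})$, which is polynomial since $\omega$ is polynomial-time computable, and if $r_j=false$ verify $obs_{n+1}\not\subseteq obs_j$, which is a polynomial set-containment test. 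Hence the whole verification is polynomial and the problem is in NP.

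For NP-hardness, the plan is to reduce from an NP-complete problem whose structure matches the two constraints a valid extension must satisfy: a cost lower bound (coming from the $r_j=true$ entries) and a ``not contained in any forbidden set'' requirement (coming from the $r_j=false$ entries). The natural choice is to encode a \textsc{SAT} or \textsc{Set Cover}/\textsc{Hitting Set} instance. Concretely, I would let $Obs$ be a set of predicates in bijection with the literals (or variables) of a CNF formula, define $\omega$ so that the ``cheapest strictly-below-the-incumbent'' threshold forces the chosen $obs_{n+1}$ to have a bounded number of elements (a budget), and introduce one $false$-entry $(obs_j,false)$ per clause so that $obs_{n+1}\not\subseteq obs_j$ exactly expresses that $obs_{n+1}$ must contain at least one predicate outside $obs_j$, i.e. it must ``hit'' that clause. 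Choosing $obs_j$ to be the complement of the clause makes $obs_{n+1}\not\subseteq obs_j$ equivalent to $obs_{n+1}$ intersecting the clause. One then checks that a non-redundant extension exists iff the original instance is satisfiable/coverable; the $r_j=true$ entries can be used (with a carefully chosen additive or cardinality-based $\omega$) to impose the budget that rules out the trivial solution $obs_{n+1}=Obs$.

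The main obstacle I expect is engineering the cost function $\omega$ and the already-present sequence $seq_n$ so that the two constraints interact in exactly the right way while keeping $seq_n$ itself non-redundant. In particular, the $true$-entries impose $\omega(obs_j)>\omega(obs_{n+1})$ for \emph{all} of them simultaneously, so the effective constraint is $\omega(obs_{n+1})<\min_j\{\omega(obs_j):r_j=true\}$; I need to ensure this minimum is a meaningful budget rather than vacuous or unsatisfiable, and that the $true$-entries and $false$-entries are mutually consistent with non-redundancy (e.g. no $obs_i$ with $r_i=false$ is contained in an earlier $obs_j$, and costs are strictly decreasing along the $true$-entries). Using a cardinality-like cost $\omega(S)=|S|$ (which is monotone and polynomial-time computable) and a single $true$-entry of cost $k+1$ to encode budget $k$ is the cleanest route; the remaining care is just in verifying the non-redundancy bookkeeping for the constructed prefix, which is routine but must be done.
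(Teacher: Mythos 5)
Your plan is essentially the paper's proof: the paper also certifies membership in NP by the extension itself, and proves hardness by exactly the reduction you sketch -- one $false$-entry per constraint, taken as the \emph{complement} of that constraint so that $obs_{n+1}\not\subseteq obs_j$ means ``$obs_{n+1}$ hits constraint $j$,'' plus a single $true$-entry encoding a cardinality budget $k$. (The paper reduces from vertex cover, a special case of your hitting set, with $obs_i$ being the vertices not incident to edge $e_i$.)

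The one place your sketch does not yet work is the very obstacle you flag at the end. With $\omega(S)=|S|$ a ``single $true$-entry of cost $k+1$'' must be a set $T\subseteq Obs$ with $|T|=k+1$. If $T$ is appended after the $false$-entries, non-redundancy of the prefix forces $T\not\subseteq obs_j$ for every $j$, i.e.\ $T$ must already be a hitting set of the instance -- which you cannot assume (and if you could, the reduction would be pointless). If $T$ is placed first, every later $false$-entry must have cost strictly below $k+1$, i.e.\ every clause complement must have fewer than $k+1$ elements, which fails for small budgets. The paper closes this by adding a fresh element $o_c\notin V$ and setting $\omega(S)=|S|+k$ when $o_c\in S$ and $\omega(S)=|S|$ otherwise (still monotone and polynomial-time); the final entry is $(\{o_c\},true)$, which trivially satisfies $\{o_c\}\not\subseteq obs_j$ and has cost exactly $k+1$, and monotonicity forces any candidate cheaper than $k+1$ to avoid $o_c$, so its cost reduces to its cardinality. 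With that gadget your argument goes through verbatim.
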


\vspace{-15pt}
\subsection{State space reusage from finer observations}\label{space_reusage}

\begin{figure}[t]
  \begin{pspicture}(10,3.4)
    \put(0.7,0){
      \put(-0.4,3.9){\textbf a)}
      \put(0.0,0.0){\includegraphics[scale=0.41]{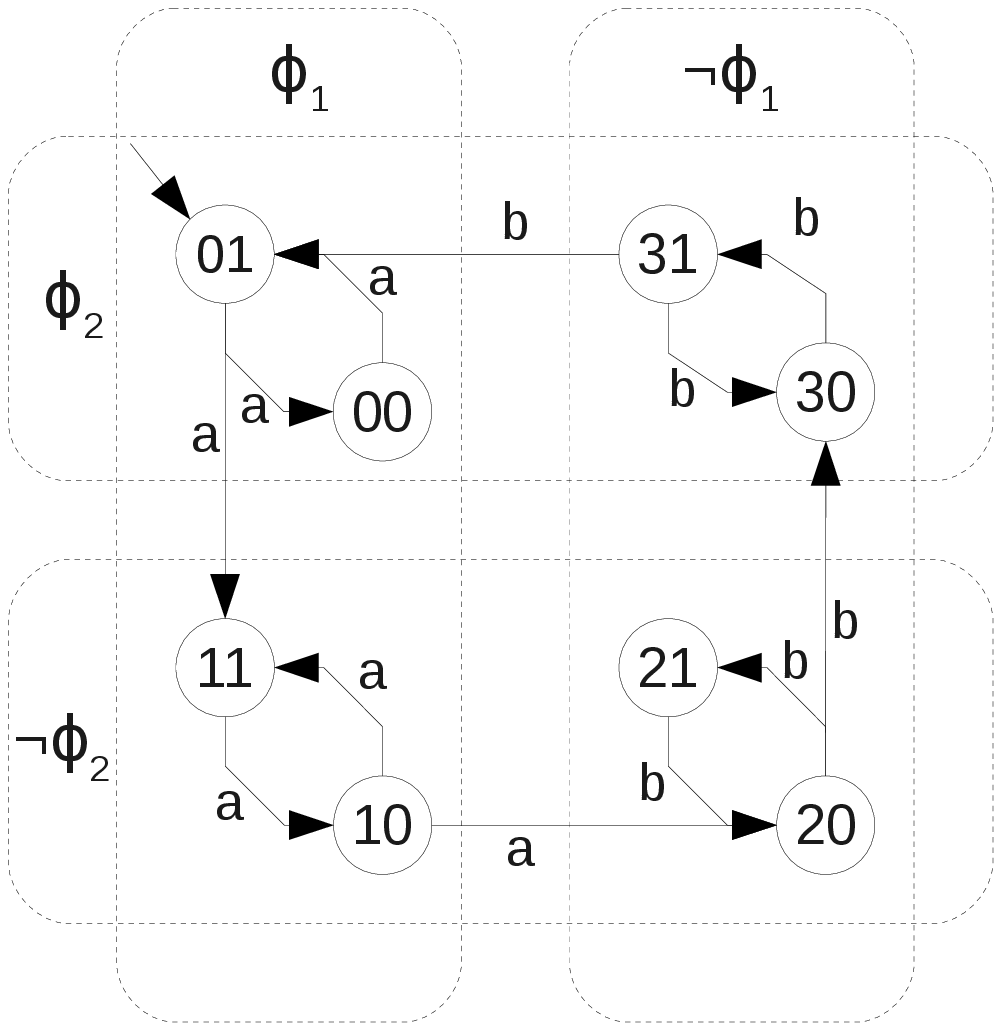} }
    }
    \put(5.5,+0.8){
      \put(-0.4,3.2){\textbf b)}
      \put(0,0){\includegraphics[scale=0.41]{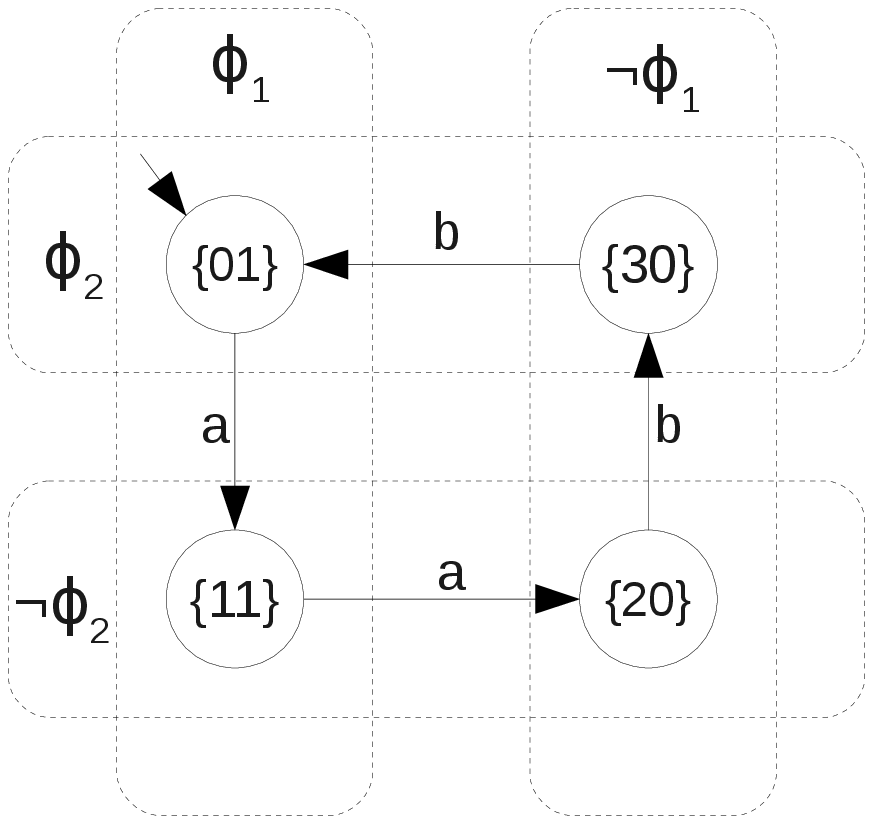}}
    }
    \put(9.8,+0.5){
      \put(-0.4,3.55){\textbf c)}
      \put(0,0){\includegraphics[scale=0.41]{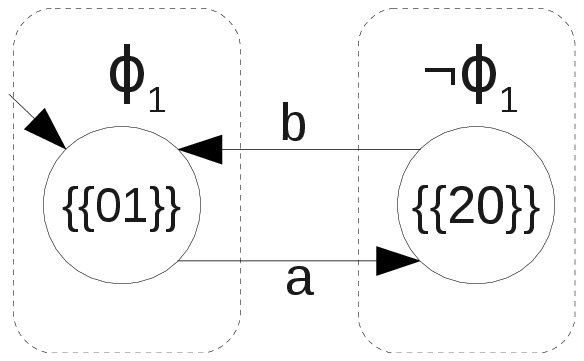}}
    }
    \put(9.8,+2.8){
      \put(-0.4,-1.10){\textbf d)}
      \put(0,0){\includegraphics[scale=0.41]{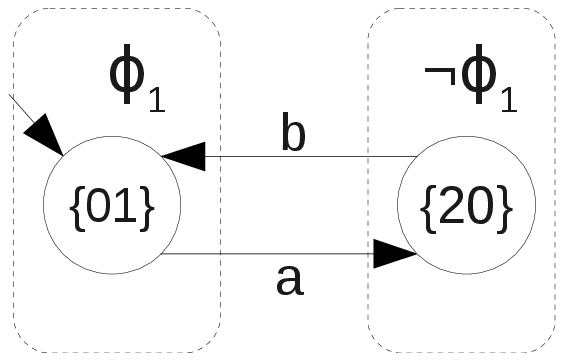}}
    }    
  \end{pspicture}
  \caption{a) The original LTS $A$ and two observable predicates $\varphi_1$ and $\varphi_2$, \newline 
     \,\, b) the knowledge game $G_{f}$ for $A$ with observable predicates $\{\varphi_1, \varphi_2\}$, \newline 
    c) the knowledge game $G^1_{c}$ for $A$ with observable predicates $\{\varphi_1\}$, \newline 
    d) the knowledge game $G^2_{c}$ for $G_{f}$ with observable predicates $\{\varphi_1\}$}
  \label{reusage:demonstration}
\end{figure}

Intuitively, if we have already solved a knowledge game $(G_{f}, \psi_f)$ for a set  $obs_f$ of observable predicates, then we can view a knowledge game $(G_{c}, \psi_c)$ associated with a \emph{coarser} set of observable predicates $obs_c \subset obs_f$ as an imperfect information game with respect to $(G_{f}, \psi_f)$. 
Thus we can solve the knowledge game for $obs$ without exploring the state space of the TGA $M$ and therefore without using the expensive DBM operations.
Moreover, we can build another game on top of $G_{c}$ (for an observable predicates set that is coarser than $obs$) and thus construct a ``Russian nesting doll'' of games.
This is an important contribution of our paper, since this construction can be applied not only to Timed Games, but also to any modeling formalism that have finite knowledge games.

The state space reusage method is demonstrated on a simple LTS $A$ at Fig.~\ref{reusage:demonstration}.
Suppose, that we already built the knowledge game $G_{f}$ for the observable predicates $\{\varphi_1, \varphi_2\}$.
Now, if we want to build a knowledge game for $\{\varphi_1\}$, we can do that in two ways.
First, we can build it from scratch based on the state space of $A$, and the resulting knowledge game $G^1_{c}$ is given at subfigure c.
Alternatively, we can build the knowledge game $G^2_{c}$ on the top of $G_{f}$ (see subfigure d).
The states of $G^1_{c}$ are sets of states of $A$ and the states of $G^2_{c}$ are sets of sets of states of $A$.
The games $G^2_{c}$ and $G^1_{c}$ are bisimilar, thus Player I wins in $G^1_{c}$ iff he wins in $G^2_{c}$ (for any safety predicate).
The latter is true for any LTS $A$, that is stated by the following theorem and corollary (that are proved in the appendix \ref{theorem_3})
:

\newcounter{thm-REUSE}
\setcounter{thm-REUSE}{\value{theorem}}
\begin{theorem}\label{reusage_theorem}
 Suppose that $obs_c \subset obs_f$, $(G_{f}, \psi_{f})$ is the knowledge game for $(A, \varphi, obs_f)$,
 $(G^1_{c}, \psi^1_{c})$ is the knowledge game for $(A, \varphi, obs_c)$ and $(G^2_{c}, \psi^2_{c})$ is the knowledge game for $(G_{f}, \psi_{f}, obs_c)$. 
 Then the relation $R = \{(v, v') | v = \bigcup_{s' \in v'}s'\}$ between the states of $G^1_{c}$ and $G^2_{c}$ is a bisimulation.
\end{theorem}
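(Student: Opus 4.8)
The plan is to show that $R = \{(v,v') \mid v = \bigcup_{s' \in v'} s'\}$ is a bisimulation by verifying the three standard conditions: $R$ relates the initial states, related states agree on the safety predicate, and $R$ is a forward/backward simulation for the transition relations of $G^1_c$ and $G^2_c$. The key conceptual point — which I would isolate as a lemma up front — is that $obs_c$-invisibility in $G_f$ corresponds exactly to $obs_c$-invisibility in $A$: since $obs_c \subset obs_f$, a state $s$ of $A$ determines $\gamma_{obs_c}(s)$ from $\gamma_{obs_f}(s)$, so a belief $u \in V_f$ of $G_f$ has a well-defined observation $\gamma_{obs_c}(u)$ (all its states agree on $obs_c$ because they already agree on $obs_f \supseteq obs_c$), and $\gamma_{obs_c}(u)$ equals $\gamma_{obs_c}(s)$ for any $s \in u$. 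Hence a $G_f$-transition $u_1 \xrightarrow{a}_g u_2$ is $obs_c$-visible iff, for $s \in u_1$, the corresponding $A$-transitions leave the $obs_c$-observation class of $s$.

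The technical heart is then to relate $Post_{obs_c}$ and $Sink_{obs_c}$ at the two levels. Concretely, I would prove: if $(v,v') \in R$, then for every action $a$,
\[
\bigcup_{u_2} u_2 \;=\; Post_{obs_c}(v,a),
\]
where the union ranges over all $u_2 = Post_{obs_c}^{G_f}(v',a) \cap \gamma_{obs_c}^{-1}(o)$ with $o \subseteq obs_c$ — i.e. the $G_f$-level one-step $a$-successors of $v'$ sliced by $obs_c$-observations — and similarly that the union of the $o$-slice of the $G_f$-level post equals the $o$-slice of the $A$-level post for each fixed $o$. The inclusion ``$\subseteq$'' follows because each step of the $G_f$-run underlying an element of $u_2$ is a $Post_{obs_f}$ or $Sink_{obs_f}$ step, and unrolling these produces an $a$-labelled $obs_f$-path in $A$ (hence in particular an $obs_c$-path until the last step) witnessing membership in $Post_{obs_c}(v,a)$; here the urgent-action convention for controllable transitions and the treatment of $Sink$ must be tracked so that divergence inside an $obs_c$-class is preserved. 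The inclusion ``$\supseteq$'' is the more delicate direction: given an $a$-labelled $obs_c$-invisible path in $A$ from $s \in v$ ending with a visible step, I must chop it at the $obs_f$-observation changes into a sequence of $G_f$-transitions landing in beliefs of $G_f$, all within the $obs_c$-class of $s$ except the last, thereby exhibiting the target $A$-state inside some $u_2$. For the $Sink$ clauses I would argue analogously: $Sink_{obs_c}(v,a)$ holds iff there is an infinite $a$-run in $A$ staying in the $obs_c$-class of $s$, which (by König / by the finiteness of $G_f$) corresponds either to an infinite $a$-run in $G_f$ staying in the $obs_c$-class, i.e. $Sink_{obs_c}^{G_f}(v',a)$, or to reaching a belief where $Sink_{obs_f}$ loops — and in the $G_f$ semantics the latter is again folded into a self-loop, so it still yields $Sink_{obs_c}^{G_f}(v',a)$. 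Combining these, a transition $v_1 \xrightarrow{a}_{g} v_2$ in $G^1_c$ is matched by a transition $v'_1 \xrightarrow{a}_{g} v'_2$ in $G^2_c$ with $v_2 = \bigcup_{s' \in v'_2} s'$, and conversely, which is exactly the bisimulation step; the initial-state and $\psi$-agreement conditions are immediate since $v_{init}^1 = \{s_{init}\} = \bigcup \{s_{init}\} = \bigcup v_{init}^2$ and $\varphi \in obs_c$ makes $\psi$-membership depend only on $\gamma_{obs_c}(v) = \gamma_{obs_c}(v')$.

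I expect the main obstacle to be the ``$\supseteq$'' direction of the $Post$ equality together with the $Sink$ bookkeeping: one has to be careful that a single $obs_c$-invisible but $obs_f$-visible $A$-step is realized as a genuine intermediate $G_f$-state (not skipped), that the last, $obs_c$-visible step is correctly placed, and that infinite $obs_c$-invisible behaviour in $A$ is faithfully reflected by the mixture of $Post$-chains and $Sink$-self-loops available in $G_f$ — in other words, that the two-level ``nesting'' does not lose or create divergence. Once the $Post$/$Sink$ correspondence lemma is established, the bisimulation verification is routine. The stated corollary (equivalence of winning for any safety predicate) then follows immediately, since safety games are invariant under bisimulation and $R$ relates the initial states.
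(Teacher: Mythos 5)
Your proposal is correct and follows essentially the same route as the paper's proof: both establish the bisimulation step by showing that membership of an $A$-state in a $G^1_c$-successor and in the union of a $G^2_c$-successor are each witnessed by $a$-labelled paths in $A$ that stay in the $obs_c$-class until the final visible step, and that these witnesses translate into one another by unrolling $G_f$-transitions, respectively chopping $A$-paths at the $obs_f$-observation changes. If anything, you are more careful than the paper, whose written proof only treats the $Post$-type transitions and silently omits the $Sink$ self-loop case (and the K\"onig-style argument needed to relate divergence at the two levels) that you explicitly address.
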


\begin{corollary} 
  Player I wins in $(G^1_{c}, \psi^1_{c})$ iff Player I wins in $(G^2_{c}, \psi^2_{c})$.
\end{corollary}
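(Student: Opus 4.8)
The plan is to verify directly that $R = \{(v,v') \mid v = \bigcup_{s' \in v'} s'\}$ is a bisimulation between $G^1_c$ and $G^2_c$, and in particular that $R$ relates the two initial states and respects the safety predicates $\psi^1_c$ and $\psi^2_c$. First I would record the key structural fact linking the three games: a state of $G^1_c$ is a belief $v \subseteq S$ all of whose elements share the same $obs_c$-observation, while a state of $G^2_c$ is a belief $v'$ over the state space $V_f$ of $G_f$, i.e. a set of $obs_f$-beliefs over $S$ that all share the same $obs_c$-observation (where the $obs_c$-observation of an $obs_f$-belief $w \in V_f$ is $\gamma_{obs_c}(w) = \gamma_{obs_c}(s)$ for any $s \in w$, which is well defined since $obs_c \subset obs_f$ and all states of $w$ agree on $obs_f$, hence on $obs_c$). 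So $\bigcup_{s' \in v'} s'$ is a genuine $obs_c$-belief over $S$, and $R$ is well defined; moreover $v_{init}^1 = \{s_{init}\}$ and $v_{init}^2 = \{\{s_{init}\}\}$ are clearly $R$-related. For the predicates: $v' \models \psi^2_c$ iff $\varphi \in \gamma_{obs_c}(v')$ iff $\varphi$ holds at some (equivalently every) element of some $w \in v'$ iff $\varphi \in \gamma_{obs_c}(\bigcup_{s'\in v'} s') $, i.e. iff $v \models \psi^1_c$; so $R$ preserves the objective.

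Next I would prove the transfer property in both directions, for an arbitrary action $a \in \Sigma$ and a pair $(v,v') \in R$. The heart of the matter is to show
\[
\bigcup \{\, w' \;\mid\; w' \in Post_{obs_c}(w, a),\ w \in v'\,\} \;=\; Post_{obs_c}\!\Bigl(\bigcup_{s'\in v'} s',\ a\Bigr),
\]
together with the analogous statement $Sink_{obs_c}(v',a) \iff Sink_{obs_c}(\bigcup_{s'\in v'}s', a)$, where on the left $Post$ and $Sink$ are computed in $G_f$ and on the right in $A$. The crucial observation is that an $a$-labelled transition in $G_f$ is $obs_c$-invisible (resp. $obs_c$-visible) exactly when it either is a $Sink$ self-loop, or corresponds — via the definition of $\rightarrow_g$ for $G_f$ — to a maximal run of $a$-labelled $obs_f$-invisible transitions in $A$ followed by an $obs_f$-visible one, where the final $obs_f$-observation agrees with (resp. differs from) the starting one on $obs_c$. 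Concatenating such $G_f$-level runs and unfolding each one into its underlying $A$-run, one sees that a run in $A$ witnessing membership in $Post_{obs_c}(\bigcup_{s'\in v'}s', a)$ decomposes into segments that are exactly the $A$-runs underlying a $G_f$-level $obs_c$-invisible/visible run, modulo the subtlety that an $obs_f$-visible step inside $A$ which is still $obs_c$-invisible is "absorbed" as an internal $obs_f$-belief boundary. I would make this precise by an induction on the length of the $A$-run, splitting it at $obs_f$-observation changes. Once this set equality is in hand, one intersects both sides with $\gamma_{obs_c}^{-1}(o)$ for each $o \subseteq obs_c$: on the right this directly gives the successor beliefs of $v$ in $G^1_c$, and on the left it gives (the union of) the successor beliefs of $v'$ in $G^2_c$, because intersecting the $G_f$-belief-set $v'$ with $\gamma_{obs_c}^{-1}(o)$ and then taking $\bigcup$ commutes with first taking $\bigcup$ and then intersecting with $\gamma_{obs_c}^{-1}(o)$ — here one uses that each $w \in V_f$ is $obs_f$-homogeneous, hence entirely inside or entirely outside $\gamma_{obs_c}^{-1}(o)$. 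This yields: for every $a$-successor $v_2$ of $v$ in $G^1_c$ there is an $a$-successor $v_2'$ of $v'$ in $G^2_c$ with $v_2 = \bigcup_{s'\in v_2'} s'$, and conversely; i.e. the bisimulation step.

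The main obstacle I anticipate is exactly the bookkeeping in that set equality: the two games "chunk" the state space at different granularities — $G^2_c$ re-uses the $obs_f$-beliefs of $G_f$ as indivisible atoms, so a single $obs_c$-invisible transition of $G^2_c$ may traverse several $obs_f$-visible (but $obs_c$-invisible) boundaries — and one must be careful that $Sink$ behaviour is not created or destroyed by this regrouping (an infinite $obs_c$-stuttering run of $A$ projects to an infinite run of $G_f$ that is $obs_c$-stuttering, using König's lemma / the fact that each $obs_f$-segment is finite only if there is no $obs_f$-level sink, but an $obs_f$-level sink is itself an infinite $obs_c$-stuttering run, so both cases are covered). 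I would isolate this as a lemma about how $Post_{obs}$ and $Sink_{obs}$ interact with the coarsening $obs_c \subset obs_f$, prove it once by the length/prefix induction sketched above, and then the bisimulation claim and its Corollary (via the standard fact that bisimilar states of safety games have the same winner, together with $v_{init}^1 \mathrel{R} v_{init}^2$) follow immediately.
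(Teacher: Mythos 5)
Your derivation of the corollary matches the paper's: the paper likewise takes the bisimulation $R$ of Theorem~\ref{reusage_theorem} as given, observes that $R$ relates the two initial states and preserves the safety predicates (for $(v,v')\in R$, $v \models \psi^1_{c}$ iff $v' \models \psi^2_{c}$), and concludes by the standard fact that a predicate-preserving bisimulation transfers the winner of a safety game. The bulk of your text actually re-proves Theorem~\ref{reusage_theorem} itself, which the paper's corollary proof simply cites; that part also follows the paper's own decomposition of runs into $obs_f$-segments (its ``proof sequences''), and is if anything more careful than the paper in explicitly handling the $Sink_{obs}$ self-loop case, which the paper's proof of the theorem passes over.
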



This reusage method is also correct for the case when an input model is defined as a TGA (since we can apply the theorem to the underlying LTS).




\paragraph{\textbf{Implementation}}
Our Python prototype implementation of this algorithm (see \url{https://launchpad.net/pytigaminobs})
explicitly stores the set of candidates and uses the on-the-fly DBM-based algorithm of \cite{cdllr07} for the construction and solution of knowledge games for IISTG (the algorithm stops early when it detects that the initial state is losing).

\vspace{-5pt}
\section{Case studies}\label{casestudies}
\vspace{-10pt}
We applied our implementation to two case studies. 

The first is a ``Train-Gate Control'', where two trains tracks
merge together on a bridge and the goal of the controller is to
prevent their collision. 
The trains can arrive
in any order (or don't arrive at all), thus the challenge for the
controller is to handle all possible cases.

The second is ``Light and Heavy boxes'', where a box is being processed on the conveyor in several steps, and the goal of the controller is to move the box to the next step within some time bound after it has been processed at the current step.

\begin{figure}
\center{
\includegraphics[width=0.8\textwidth]{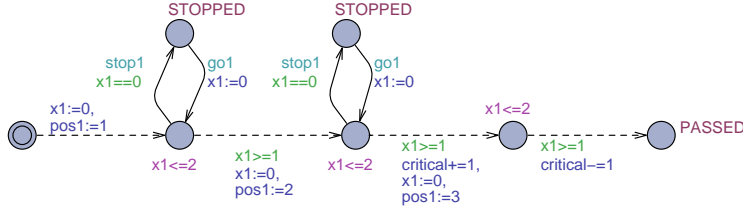} 
}
\caption{A model of a single train}
\label{traingate}
\end{figure}
\vspace{-10pt}
\subsection{Train-Gate control}
The model of a single (first) train is depicted at Fig.~\ref{traingate}.
There are two semaphore lights before the bridge on each track.  A
train passes the distance between semaphores within $1$ to $2$ time units.
A controller can switch the semaphores to red (actions \verb|stop1|
and \verb|stop2| depending on the track number), and to green (actions
\verb|go1| and \verb|go2|).  These semaphores are intended to prevent
the trains from colliding on the bridge.  When the red signal is
illuminated, a train will stop at the next semaphore and wait for the
green signal.

It is possible to mount sensors on the semaphores, and these sensors
will detect if a train approaches the semaphore.  This is modeled with
observable predicates $(pos1 \geq 1)$, $(pos2 \geq 1)$, $(pos1 \geq
2)$ and $(pos2 \geq 2)$.

\begin{figure}
\begin{center}

\subfloat[Running time (the average is computer on $10$ runs)] {
\label{tab:1}
\begin{tabular}{|l|c|c|c|c|c|c|c|c|}
   \hline
   exploration order            & \multicolumn{2}{|c|}{expensive first} & \multicolumn{2}{|c|}{cheap first} & \multicolumn{2}{|c|}{midpoint} & \multicolumn{2}{|c|}{random} \\ \hline
   state space reusage          & with             & without            &       with      &     without     &    with     &    without       &   with     &    without      \\ \hline
   minimum                  & 10m              & 1h03m              & 50m             & 49m             & 24m         &   41m            & 10m        &  48m          \\ 
    \hline
   maximum      & 11m              & 1h36m              & 1h30m           & 1h34m           & 55m         &   1h36m          & 1h26m      &  1h44m          \\ \hline
   average & 10m              & 1h18m              & 1h0m            & 1h12m           & 33m         &   1h03m          & 37m        &  1h05m          \\ \hline   
 
\end{tabular}
}

\subfloat[The average number of iterations] {
\label{tab:2}
\begin{tabular}{|l|c|c|c|c|c|c|}
   \hline
   exploration order            & expensive first & cheap first & midpoint & random \\ \hline
   without state space reusage  & 1                  & 21.69           & 5.27           & 6.17         \\
    \hline
   with state space reusage & 7.1 & 0 & 2.7 & 3.46

 \\ \hline
 
\end{tabular}
}

\end{center}
\caption{Results for the Train-Gate model}
\label{traingate_result}
\end{figure}

The controller has a discrete timer that is modeled using the clock
$y$.  At any time this clock can be reset by the controller (action
\verb|reset|).  There is an available observable predicate $(y < 2)$
that becomes false when the value of $y$ reaches $2$.  This allows the
controller to measure time with a precision $2$ by resetting $y$ each
time this predicate becomes false and counting the number of such resets.

The integer variable $critical$ contains the number of trains that are
currently on the bridge.  The safety property is that no more than one
train can be at the critical section (bridge) at the same time and the
trains should not be stopped for more than $2$ time units:
\vspace{-5pt}
$$
(critical < 2) \land ((Train1.STOPPED) \rightarrow (x1 \leq 2)) \land ((Train2.STOPPED) \rightarrow (x2 \leq 2)) 
$$
\vspace{-5pt}

The optimal controller uses the following set of observable predicates: $(pos1 \geq 2)$, $(pos2
\geq 2)$ and $(y < 2)$.  Such a controller waits until the second (in
time) train comes to the second semaphore, then pauses this train and
lets it go after $2$ time units.  

Figure~\ref{tab:1} reports the time needed to find this solution for different parameters of the algorithm.
Figure~\ref{tab:2} contains the average number of iterations of
Algorithm~\ref{alg:basiclatticealgorithm} (i.e. game checks for different sets of observable predicates).
You can see that it requires only a fraction of the total number of all
possible solutions $2^5 = 32$.  Additionally, the state space reusage heuristic allows to
improve the performance, especially for the ``expensive first''
exploration order. 
For this model the most efficient way to solve the optimization problem is to first solve the game with all the available predicates being observed, and then always reuse the state space of this knowledge game.
The numbers of $0$ and $1$ at Figure~\ref{tab:2} reflect that we \emph{don't} reuse the state space exactly once for the ``expensive first'' order, and we never reuse the state space for the ``cheap first'' exploration order.

The game size ranges from $5$ states for the game when only the safety state predicate is observable to $9202$ for the case when all the available predicates are observable.
The number of the symbolic states of TGA (i.e. different pairs of reachable locations and DBMs that form the states of a knowledge game) ranges from $1297$ to $31171$, correspondingly.

\subsection{Light and Heavy Boxes}
\vspace{-10pt}

\begin{figure}
\begin{center}
\includegraphics[width=0.5\textwidth]{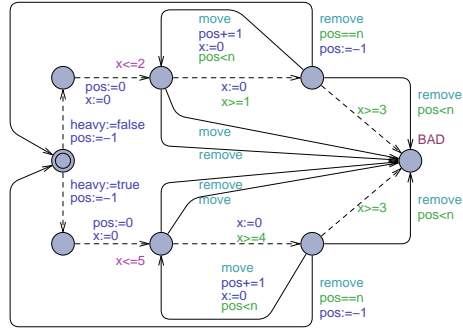}
\end{center}
\vspace{-5pt}
\caption{Light and heavy boxes model}
\label{lhboxes}
\end{figure}

\begin{figure}
\begin{center}
\includegraphics[width=1\textwidth]{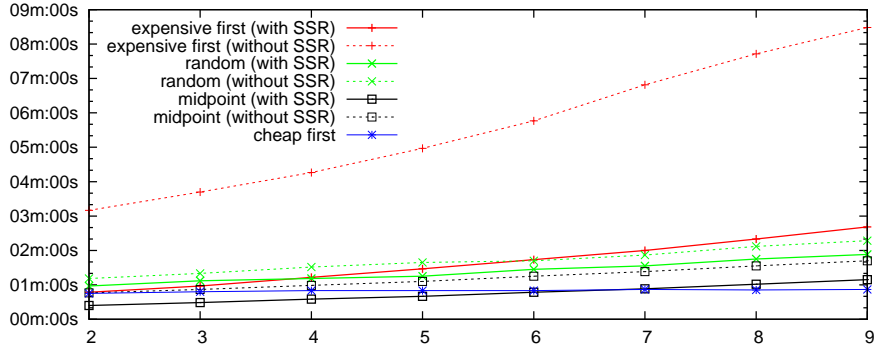}
\end{center}
\caption{Average running time (SSR states for State Space Reusage)}
\label{lhboxes_results}
\end{figure}

\vspace{-10pt}
Consider a conveyor belt on which Light and Heavy boxes can be put.
A box is processed in $n$ steps ($n$ is a parameter of the model), and the processing at each step takes from $1$ to $2$ time units for the Light boxes, and from $4$ to $5$ time units for the Heavy boxes.
The goal of the controller is to move a box to the next step (by rotating the conveyor, with an action \verb|move|) within $3$ time units after the box has been processed at the current step. 
At the last step the controller should remove (action \verb|remove|) the box from the conveyor within $3$ time units. If the controller rotates the conveyor too early (before the box has been processed), too late (after more than $3$ time units), or does not move it at all, then the Controller loses (similar is true for the removing of the box at the last step). Additionally, the controller should not rotate the conveyor when there is no box on it, and should not try to remove the box when the box is not at the last step.
Our model is depicted at Fig.~\ref{lhboxes}, and the goal of the controller is to avoid the \verb|BAD| location.

A box can arrive on the conveyor at any time, and there is an observable predicate $(pos=0)$ with cost $1$ which becomes true when the box is put on the conveyor. Additionally, there is predicate $(heavy=true)$ with cost $1$ that becomes true if a heavy box arrives.
The model is cyclic, i.e. another box can be put on the conveyor after the previous box has been removed from it.

As in the Traingate model, the controller can measure time using a special clock $y$.
We assume that a controller can measure time with different granularity, and more precise clocks cost more.
We model this by having three available observable predicates: $(y < 1)$ with cost $3$, $(y < 2)$ with cost $2$, and $(y < 3)$ with cost $1$.

A naive controller works with the observable predicates $\{(heavy=true), (pos=0), (y<1)\}$, resets the clock $y$ each time a new box is arrived, and then move it to the next step (remove after the last iteration) each $2$ time units if the box is \emph{light} and $5$ time units if the box is \emph{heavy}.
However, it is not necessary to use the expensive $(y<1)$ observable predicate, since a controller can move a box after each $3$ ($6$ for heavy box) time units, thus the time granularity of $3$ is enough and there is a controller that uses the observable predicates $\{(heavy=true), (pos=0), (y<3)\}$.
Our implementation detects such an optimal solution, and Fig.~\ref{lhboxes_results} demonstrates an average time needed to compute this solution for different numbers of box processing steps $n$.
You can see that the state space reusage heuristics improves the performance of the algorithm.

The game size for this model ranges from $4$ knowledge game states and $51$ symbolic NTA states when there are $2$ processing steps and only safety predicate is observable to $6417$ knowledge game states and $15554$ symbolic NTA states for $9$ processing steps and when all the available predicate are observable.

\section{Conclusions}

In this paper we have developed, implemented and evaluated an algorithm for the cost-optimal controller synthesis for timed systems, where the cost of a controller is defined by its observation power.

Our important contributions are two optimizations: the one that helps to avoid exploration of all possible solutions and the one that allows to reuse the state space and solve the imperfect information games on top of each other.
Our experiments showed that these optimizations allow to improve the performance of the algorithm.

In the future, we plan to apply our method to other modeling formalisms that have finite state knowledge games.

\bibliographystyle{plain}
\bibliography{main}

\appendix

\section{Proof of the theorem~\ref{thm:undec}}
\label{app:undec}

\setcounter{theorem}{\value{thm-UNDEC}}
\begin{theorem} 
Consider a TGA $M$ with clocks $X$, and an (infinite) set of available predicates
$Obs = \{x < \frac{1}q \mid x\in X, q\in \mathbb{N}, q \geq 1 \}$ and the safety objective $\varphi$. Determining whether there 
exists a finite set of predicates $obs \subset Obs$ such that Player I wins in IISTG $(M, \varphi, obs)$ is undecidable.
\end{theorem}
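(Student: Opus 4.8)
The plan is to reduce the halting problem for deterministic two-counter (Minsky) machines to our problem. Note first that the property ``there exists a finite $obs \subset Obs$ such that Player~I wins $(M,\varphi,obs)$'' is semi-decidable: the finite subsets of $Obs$ are effectively enumerable, and for each of them the game $(M,\varphi,obs)$ is an IISTG over a TGA, which is decidable via the finite knowledge-game construction of~\cite{cdllr07}; so one dovetails over all finite $obs$ and accepts as soon as one of them is found to be winning for Player~I. Hence it suffices to build, from a counter machine $\mathcal M$, a TGA $M$ and an observable safety predicate $\varphi$ such that Player~I wins $(M,\varphi,obs)$ for some finite $obs \subset Obs$ if and only if $\mathcal M$ halts from the zero configuration; this already yields undecidability (in fact $\Sigma^0_1$-completeness).

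\emph{Construction.} First put $\mathcal M$ in a normal form with an additional counter that is incremented at every instruction, so that a non-halting run makes some counter grow beyond every bound (halting is preserved). In $M$ we simulate a run of $\mathcal M$ configuration by configuration: the control location of $\mathcal M$ is stored in the location of $M$, and the value $c$ of a counter is encoded by a clock held in a \emph{shrinking window}, at scale $\Theta(2^{-c})$, so that telling counter value $c$ apart from $c\pm 1$ requires a predicate $x<1/q$ with $q$ growing with $c$; the environment (Player~II) materialises the encoded successor configuration by choosing suitable delays, subject to the integer-constant guards and invariants of $M$. The micro-transitions realising increment, decrement and zero-test are split between the players so that at each instruction Player~I must first \emph{commit} to the branch taken --- which forces Player~I to read off the relevant counter value from its current observation --- after which Player~II resolves the remaining nondeterminism. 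The gadgets are designed (using the urgent controllable semantics and the $k_1 \le x < k_2$ guard restriction of~\cite{cdllr07}) so that a commitment \emph{incompatible} with the true counter value immediately enables a Player~II move to a distinguished \emph{bad} location, while a correct commitment keeps the play faithful; the faithful run is routed to a safe absorbing state exactly when it reaches the halting location of $\mathcal M$. The safety predicate $\varphi$ requires staying out of the bad location; it is made observable by taking $obs$ to always include it (and finitely many fixed coarse helper predicates), the genuinely tunable part of $obs$ ranging over finite subsets of $Obs$.

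\emph{Correctness of the reduction.} If $\mathcal M$ halts, it does so in some $k$ steps with all counter values bounded by $k$; choosing for $obs$ all predicates $x<1/q$ up to the granularity that separates counter values $\le k$ (a finite set) lets Player~I read every counter value that ever occurs and hence always commit correctly; the play is then forced to remain faithful and to reach the halting location, i.e. the safe state, so Player~I wins. Conversely, if $\mathcal M$ does not halt, fix any finite $obs$ and pick $N$ large enough that $obs$ contains no $x<1/q$ with $q>N$; along the infinite faithful run of $\mathcal M$ some counter eventually exceeds the precision threshold attached to $N$, and from then on Player~I's knowledge can no longer separate the true configuration from a neighbouring one; Player~II then has a strategy that presents an admissible sequence of observations while corrupting that counter and steering the game into the bad location --- so Player~I loses for this $obs$, hence for every finite $obs$.

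\emph{Main obstacle.} The delicate point is the second direction together with its compatibility with the first: since safety objectives are prefix-closed, ``a deviation that is never observed'' cannot in itself be unsafe, so the gadgets and invariants must be engineered so that an \emph{observationally indistinguishable} deviation still leaves Player~II with a concrete move to the bad location once Player~I has been trapped into a wrong commitment, while simultaneously a fully informed Player~I is \emph{never} exposed to any route to the bad location and can always reach the safe state. Realising all this inside the restricted TGA syntax required to reuse the knowledge-game machinery of~\cite{cdllr07} --- in particular arranging that the environment's freedom to choose delays is exactly sufficient to materialise the configurations needed in each case, and no others --- is where the construction has to be carried out with care; the underlying counter-machine-into-timed-automaton encoding is a variation on classical constructions and is routine by comparison.
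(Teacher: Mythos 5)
Your overall strategy---tie the existence of a sufficiently fine finite set of predicates to the boundedness of the counters along the run, so that a non-halting (hence counter-unbounded) machine defeats every finite observation set---is the right shape, and your opening remark that the problem is semi-decidable (hence the reduction from halting would give $\Sigma^0_1$-completeness) is correct and is a nice observation the paper does not make. However, the route you take is genuinely different from the paper's, and it contains a gap at its core that I do not believe can be repaired as stated. You propose to encode the counter value $c$ in a clock held in a ``shrinking window at scale $\Theta(2^{-c})$'', with Player~II ``materialising the encoded successor configuration by choosing suitable delays, subject to the integer-constant guards and invariants of $M$.'' But a TGA in this paper's sense has only integer constants in its guards and invariants and only clock resets as updates; no combination of such guards can force an adversarially chosen delay to realise the halving/doubling relations between clock values that your encoding requires. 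The observable predicates $x<\frac{1}{q}$ cannot do this job either: they are not guards, they only filter what Player~I sees, and they never constrain Player~II's choice of delay. So nothing in the model compels the faithful-simulation branch of your game to actually encode the successor configuration correctly---the forward direction (a halting machine yields a winning Player~I) already breaks, not just the converse you flag as the ``main obstacle.'' This is essentially the reason single-automaton reachability is decidable via regions, and it is why some extra mechanism is needed to smuggle in unbounded precision.

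The paper supplies that mechanism differently: it reduces from \emph{boundedness} of two-counter machines using the construction of~\cite{CHR02}, in which the location $\textsf{over}$ is reachable in the $\frac{1}{k}$-\emph{sampled} semantics iff some counter exceeds $k$. A fresh clock $z$ is reset on every transition and required to be positive, and the available predicates are of the form $(L, z<\frac{1}{q})$; since an OBSI strategy can only change its proposed action when an observation changes, every controller action falls on a tick of $\frac{1}{k}$ where $k$ is the lcm of the chosen $q$'s. Thus the role of the predicates is not to \emph{read off} an exponentially encoded counter but to \emph{impose a sampling grid} on the controller, after which the heavy lifting (counters bounded by $k$ iff the $\frac{1}{k}$-sampled behaviour is safe) is inherited from~\cite{CHR02}. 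If you want to salvage your halting-based reduction you would need to replace your encoding step by something of this kind; as written, the central gadget of your construction is asserted rather than realisable in the TGA syntax the theorem is about.
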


\begin{proof}
The proof below adopts the construction of the proof of the undecidability of the existence
of a sampling rate allowing controllability of a timed automata w.r.t. a safety objective, 
proved in~\cite{CHR02}.

  The proof is by reduction to boundedness of $2$-counter machines.
 It is based on the encoding of such a machine into a timed 
 automaton $M$ described in \cite{CHR02}. We do not recall this construction here. Formally, we write 
 $M=(L, l_{init} , X, E, \Sigma_c, \Sigma_u, I)$, and recall that
 the construction involves a location  $\textsf{over}$.
 The
  main property of this construction used in this proof is the
  following: given a rational number $\frac{1}{b}\in \mathbb{Q}_{>0}$
  and denoting $k=\lfloor b \rfloor$, the counters of $M$ never exceed
  value $k$ if and only if location $\textsf{over}$ is not reachable
  in the semantics of $\cal A$ sampled by $\frac{1}b$.
 In addition, we
  slightly modify the construction as follows: we add a fresh clock
  $z$ which, along every transition, is systematically reset and
  checked to be positive. Note that this has no effect on the sampled
  semantics of the automaton, whatever the value of the sampling. 
  We also consider that every transition are controllable. 
 
  We consider as the safety objective the set 
 $(L\setminus\{\textsf{over}\},\mathbb{R}_{\geq 0}^X)$. 
  The set of observable predicates
  is defined as $Obs_1\biguplus Obs_2$, where $Obs_1$ is the set of predicates $(\ell,\mathbb{R}_{\geq 0}^X)$, 
  for every location $\ell \in L$, and $Obs_2$ is the set of predicates $(L,z<\frac{1}{q})$, 
  where  $q\in \mathbb{N}^*$.
   We will show that there exists a finite set of predicates for
  which the system is controllable if and only if the $2$-counter
  machine is bounded. 
   
  Assume the machine is bounded, say by value $k$. Thanks to the property of 
   $M$ recalled above, the semantics of $M$ for the sampling rate 
   $\frac1k$ never enters location $\textsf{over}$, and thus verifies the 
   safety objective. 
   We prove that the system is controllable for the (finite) set of predicates
   $Obs_1 \cup \{(L,z<\frac1k)\}$.
  The strategy of the controller is as follows: it alternates between delays
  (action {\bf skip}) and  discrete actions. As clock $z$ is reset on every transition, 
  this allows to simulate a sampled behavior, for sampling rate $\frac1k$.
  Indeed, after each discrete step, the value of clock $z$ is zero, and thus
  the predicate $(L,z<\frac1k)$ will become false exactly after $\frac1k$ time 
  units. At this time, the controller proposes an action $\sigma$ which is enabled.
  The outcome of this strategy is a run of $M$ under the sampled semantics,
  for the sampling rate $\frac1k$. In particular, this run never enters location
  $\sf over$.

  Conversely, we proceed by contradiction and assume that the machine
  is not bounded and that there exists a finite subset $obs$ of 
  $Obs_1\biguplus Obs_2$ which allows to control the system.
  Let us denote by $k$ the least common multiple of integers $q$ such that
  the predicate $(L,z<\frac1q)$ belongs to $obs$.
  Since we have modified the machine by
  requesting some positive delay to elapse between two discrete
  actions, and by the semantics of stuttering invariant strategies, we 
  know that all the actions of the controller will be
  played on ticks of sampling $\frac{1}{k}$ (but not necessarily all
  of them, the controller could propose to wait, or could use a predicate which is a multiple of $\frac{1}{k}$). As a consequence,
  the controlled behavior of $M$ will be a subset of the sampling 
  semantics of $M$
  w.r.t. sampling unit $\frac{1}{k}$. This constitutes a contradiction
  since, as the machine is not bounded and by properties of the
  construction of $M$, any sampled behavior of $M$ will
  eventually either be blocked or reach location $\textsf{over}$.
\qed

\end{proof}

\section{Proof of the theorem~\ref{np_completeness_theorem}}
\label{theorem_2}

\setcounter{theorem}{\value{thm-NP}}
\begin{theorem} 
  Let $seq_n=(obs_1, r_1), (obs_2, r_2), \dots ,(obs_n, r_n)$ be a non-redundant sequence of solutions for some set $Obs$ and cost function $\omega : {\mathcal P}(Obs) \rightarrow \mathbb{R}_{\geq 0}$.
  Consider that the value of $\omega$ can be computed in polynomial time.  
  Then the problem of determining whether there exists a one-element extension \newline $seq_{n+1} = (obs_1, r_1), (obs_2, r_2), \dots ,(obs_n, r_n), (obs_{n+1}, r_{n+1})$ of $seq$ that is still non-redundant for $Obs$ and $\omega$ is NP-complete.
\end{theorem}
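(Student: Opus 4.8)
I would first strip the statement down to its combinatorial core. The symbol $r_{n+1}$ occurs in no non-redundancy constraint of $seq_{n+1}$ (it would only constrain a hypothetical $(n{+}2)$-nd element), so a non-redundant one-element extension of $seq_n$ exists if and only if there is a set $T\subseteq Obs$ with $\omega(T)<c^\star$ and $T\not\subseteq obs_j$ for every $j$ with $r_j=false$, where $c^\star:=\min\{\omega(obs_j)\mid r_j=true\}$ (and $c^\star:=+\infty$ if $seq_n$ has no $true$ entry). Reading ``$T\not\subseteq obs_j$'' as ``$T$ meets $Obs\setminus obs_j$'', this is exactly the question whether the set system $\{\,Obs\setminus obs_j\mid r_j=false\,\}$ admits a transversal of $\omega$-cost strictly below $c^\star$ --- a thresholded minimum-cost hitting-set problem. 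Membership in \textsf{NP} is then immediate: guess $(obs_{n+1},r_{n+1})$ with $obs_{n+1}\subseteq Obs$ (polynomial size); since $seq_n$ is assumed non-redundant, it only remains to check, for each $j\le n$, the single new constraint between $obs_j$ and $obs_{n+1}$, which is polynomial because $\omega$ is polynomial-time computable and set inclusion is polynomial-time decidable.

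For \textsf{NP}-hardness I would reduce from \textsc{Vertex Cover}: given $G=(V,E)$ and an integer $t$, is there a vertex cover of size at most $t$? The cases $t\le 0$ and $t\ge|V|-1$ are answered directly, so assume otherwise; adding two isolated vertices to $G$ (which preserves the answer) I may further assume that a set $A$ of size $t+1$ can be chosen over the vertex set that is distinct from every set of the form $V\setminus\{u,v\}$ with $\{u,v\}\in E$ (a cardinality argument). Now set $Obs:=V$, take the single $true$ entry $A$ just described, and for every edge $e=\{u,v\}\in E$ a $false$ entry $B_e:=Obs\setminus\{u,v\}$, giving $seq_n:=(A,true),(B_{e_1},false),\dots,(B_{e_m},false)$. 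Define $\omega(S):=0$ if $S=B_e$ for some $e\in E$, and $\omega(S):=|S|+1$ otherwise; this is polynomial-time computable (compare $S$ against the $m$ sets $B_e$).

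It then remains to verify both halves of correctness. First, $seq_n$ is non-redundant: the sets $B_e$ are pairwise $\subseteq$-incomparable (all of cardinality $|Obs|-2$, and pairwise distinct since $G$ is simple), while the $true$ entry $A$ comes first, is distinct from every $B_e$ by construction, and satisfies $\omega(A)=t+2>0=\omega(B_e)$ --- so all required constraints of the definition hold. Second, by the reformulation a non-redundant extension exists iff there is $T\subseteq V$ with $\omega(T)<c^\star=\omega(A)=t+2$ and $T\cap\{u,v\}\ne\varnothing$ for every edge $\{u,v\}$. Any such $T$ is a vertex cover of $G$, hence differs from every $B_e$ (which misses both endpoints of $e$), hence $\omega(T)=|T|+1<t+2$, i.e.\ $|T|\le t$; conversely any vertex cover of size $\le t$ is such a $T$. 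So the extension instance is a yes-instance iff $G$ has a vertex cover of size at most $t$; combined with membership, this gives \textsf{NP}-completeness.

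The step I expect to be the real obstacle is not recognising the hitting-set nature of the problem --- that is routine --- but arranging for the constructed $seq_n$ to be non-redundant while keeping the threshold $c^\star$ meaningful. Under the uniform cost $\omega=|\cdot|$ this fails: non-redundancy forces $\omega(A)>\max_e|B_e|=|Obs|-2$, so $c^\star$ is so large that any sufficiently big $T$ already satisfies the extension condition and the reduction collapses. Giving cost $0$ to exactly the ``forbidden'' sets $B_e$ is what lets the $true$ entry dominate them for free while leaving a genuine cardinality threshold on $T$; getting this interaction right, together with the minor degenerate-case and cardinality bookkeeping (hence the padding by isolated vertices), is the crux. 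Note, finally, that the theorem imposes no monotonicity on $\omega$, so this cost function is admissible; a monotone reduction is possible but more delicate and not needed here.
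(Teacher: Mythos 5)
Your proof is correct and shares its core with the paper's: both reduce from \textsc{Vertex Cover}, both encode each edge $e$ as a \emph{false} entry whose predicate set is the complement $V\setminus e$ (so that ``$obs_{n+1}\not\subseteq obs_j$'' becomes ``hit edge $e$''), and both use the single \emph{true} entry to install a cost threshold of roughly $k+1$. The divergence is in the threshold gadget. The paper appends the \emph{true} entry $(\{o_c\},true)$ \emph{last}, where $o_c$ is a fresh element outside $V$, and uses the cost $\omega(obs)=|obs|+k$ if $o_c\in obs$ and $|obs|$ otherwise; this $\omega$ is \emph{monotone}, and monotonicity is then used to argue that any cheap extension avoids $o_c$. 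You instead place the \emph{true} entry first, which (as you correctly diagnose) forces every \emph{false} entry to be cheaper than it, and you escape the resulting collapse by assigning cost $0$ to exactly the sets $V\setminus e$ --- a \emph{non-monotone} cost function. This is admissible under the theorem's literal hypotheses, so your proof is valid as a proof of the stated theorem; but the paper's problem setup (Definition of the optimization problem, and hence the instances actually produced by Algorithm~\ref{alg:basiclatticealgorithm}) requires $\omega$ to be monotone w.r.t.\ set inclusion, so the paper's reduction establishes hardness for the restricted class that actually matters for the algorithm, whereas yours does not. If you want your argument to carry the same force, you should replace the cost-$0$ trick by the paper's $o_c$-surcharge (or prove the monotone variant you allude to); the rest of your argument --- the hitting-set reformulation, the NP-membership check, and the padding/cardinality bookkeeping ensuring $A$ differs from every $V\setminus e$ --- is sound and, in places (e.g.\ the explicit observation that $r_{n+1}$ is unconstrained), cleaner than the paper's.
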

\begin{proof}

  First, we show that the problem is in NP.  Indeed, a proof certificate is a sequence $seq_{n+1}$ itself, and its fitness can be checked in polynomial time.
 
 Next, we demonstrate the NP-hardness by showing the reduction from the vertex cover problem.
 Formally, a vertex cover of a graph $G=(V, E)$ is a set $C \subseteq V$ of vertices such that each edge of $G$ is incident to at least one vertex in $C$.
 The vertex cover \emph{decision problem} is to determine for a given $G$ and $k$, whether there exists a vertex cover $C$ of the graph $G$, and the size of the set $C$ should be at most $k$.
  This problem is known to be NP-complete.
  
 Consider, that we are given a graph $(V, E)$ and a constant $k$, and we want to check if there exists a vertex cover of size at most $k$.
 Consider also that $|V| = m$, $|E| = n$, and $E=\{e_1, e_2, \dots, e_n\}$.
 Let's choose the set $Obs$ to be equal to $V \cup \{ o_c \}$, where $o_c \not \in V$ is a special element.
 Let's define the value of the cost function $\omega(obs)$ to be equal to $|obs|+k$ if $o_c \in obs$ and to be equal to $|obs|$, otherwise.
 Consider a set ${O} = \{obs_1, obs_2, \dots, obs_n\}$ of subsets of $Obs$, where $|{O}| = |E|$, and $obs_i$ contains all the vertices from $V$ that are not incident to the edge $e_i$.
 Let's order the elements from ${O}$ in a sequence $(obs_{a_1}, obs_{a_2}, \dots, obs_{a_n})$ such that $|obs_{a_i}| \leq |obs_{a_j}|$ for any $i < j$.
 All the sets of this sequence are pairwise different (i.e. $obs_{a_i} \neq obs_{a_j}$ for $i \neq j$), and thus  $obs_{a_i} \not \subseteq obs_{a_j}$ for any $i < j$.
 
 Let us consider a sequence $$seq_n = (obs_{a_1}, false),(obs_{a_2}, false) \dots, (obs_{a_n}, false), (\{o_c\}, true)$$
 First, it can be easily seen that this sequence is non-redundant for the chosen $Obs$ and $\omega$ (since $obs_{a_i} \not \subseteq obs_{a_j}$ for any $i < j$).
 
 And second, the sequence $seq_n$ can be extended by one element such that the resulting sequence is still non-redundant iff there exists a vertex cover of size at most $k$ for the graph $(V, E)$.
 Indeed, such an extension exists iff there exists a set $obs \subseteq Obs$ such, that $\omega(obs) < \omega(\{o_c\})$, and $obs \not \subseteq obs_i$ for all $i=1..n$.
 Note that $\omega$ is monotonic and therefore $\omega(obs) < \omega(\{o_c\})$ implies that $o_c \not \in obs$, and thus $\omega(obs) = |obs|$.
 Therefore, we should have $|obs| < k+1$ (since $\omega(\{o_c\})= k+1$) and for any $i$ there should be some vertex $v$ that belongs to $obs$ and doesn't belong to $obs_i$.
 The latter is equivalent to the fact that there exists a set of vertices $obs$ of size at most $k$ and any edge from $E$ is incident to at least one vertex in $obs$.
 This proves the NP-hardness. \qed
\end{proof}

\section{Proof of the theorem \ref{reusage_theorem}}
\label{theorem_3}

\setcounter{theorem}{\value{thm-REUSE}}
\begin{theorem}
 Suppose that $obs_c \subset obs_f$, $(G_{f}, \psi_f)$ is the knowledge game for $(A, \varphi, obs_f)$,
 $(G^1_{c}, \psi^1_{c})$ is the knowledge game for $(A, \varphi, obs_c)$ and $(G^2_{c}, \psi^2_{c})$ is the knowledge game for $(G_{f}, \psi_f, obs_c)$. 
 Then the relation $R = \{(v, v') | v = \bigcup_{s' \in v'}s'\}$ between the states of $G^1_{c}$ and $G^2_{c}$ is a bisimulation.
\end{theorem}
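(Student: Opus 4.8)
The plan is to verify directly the three ingredients of ``$R$ is a bisimulation relating the two safety games'': that $R$ relates the initial states, that $R$-related states agree on the safety predicate, and that $R$ satisfies the two transfer (zig-zag) conditions; all the analytic content will sit in one lemma expressing that the knowledge construction \emph{composes}. I would first dispose of the bookkeeping. Since $obs_c\subset obs_f$, every state $w$ of $G_f$ (a belief over $A$ that is uniform for $obs_f$) is in particular uniform for each predicate of $obs_c$, so each $\varphi_i\in obs_c$ restricts to a predicate on the states of $G_f$ via $w\models\varphi_i\Leftrightarrow s\models\varphi_i$ for any $s\in w$; under this restriction $\varphi$ becomes precisely $\psi_f$, so $(G_f,\psi_f,obs_c)$ is a bona fide IISG with observable safety predicate and $G^2_c$ is well defined. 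Its initial belief is $\{\{s_{init}\}\}$, whose union of members is $\{s_{init}\}$, the initial state of $G^1_c$; so the initial states are $R$-related. If $(v,v')\in R$ then all members of $v'$ carry the same $obs_c$-observation, hence so do all elements of $v=\bigcup_{s'\in v'}s'$, and $\gamma_{obs_c}(v)=\gamma_{obs_c}(v')$; moreover $v\models\psi^1_c$ iff $\varphi\in\gamma_{obs_c}(v)$ iff $\varphi$ holds at every state of every member of $v'$ iff $v'\models\psi^2_c$, so $R$ is safety-compatible.

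For the transfer conditions I would isolate the following lemma. Write $Post^{A}$, $Sink^{A}$, resp.\ $Post^{G_f}$, $Sink^{G_f}$, for the operators of Section~\ref{lts} applied to the arena $A$, resp.\ to the arena $G_f$ (so that $G^2_c$ is built from $G_f$ exactly as $G^1_c$ is built from $A$). Then for every $(v,v')\in R$ and every action $a$: (i) for every $o\subseteq obs_c$ with $o\neq\gamma_{obs_c}(v)$,
\[
\bigcup\;\{\,w \;\mid\; w\in Post^{G_f}_{obs_c}(v',a)\cap\gamma_{obs_c}^{-1}(o)\,\}\;=\;Post^{A}_{obs_c}(v,a)\cap\gamma_{obs_c}^{-1}(o),
\]
so in particular one side is empty iff the other is; and (ii) $Sink^{G_f}_{obs_c}(v',a)\Leftrightarrow Sink^{A}_{obs_c}(v,a)$. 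To prove ``$\supseteq$'' in (i) I would take an $A$-run $s_1\xrightarrow{a}\cdots\xrightarrow{a}s_n$ witnessing membership in the right-hand side and cut it at its $obs_f$-visible transitions into consecutive blocks, each block being a (possibly empty) run of $obs_f$-invisible $a$-transitions followed by one $obs_f$-visible $a$-transition --- the final transition of the run is necessarily $obs_f$-visible, because a change of $obs_c$-observation forces a change of $obs_f$-observation. An induction on the block index then shows that these blocks induce a $G_f$-run $w_0\xrightarrow{a}_g\cdots\xrightarrow{a}_g w_k$ with $s_1\in w_0\in v'$ and $s_n\in w_k$, whose first $k-1$ steps are $obs_c$-invisible in $G_f$ and whose last step is $obs_c$-visible; hence $w_k\in Post^{G_f}_{obs_c}(v',a)\cap\gamma_{obs_c}^{-1}(o)$. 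For ``$\subseteq$'' I would reverse this: given such a $G_f$-run (self-loops may be dropped) and a target state $s_n\in w_k$, I pick, going backwards, an $obs_f$-invisible-then-visible $A$-segment ending at the state already chosen in $w_i$ and starting somewhere in $w_{i-1}$, until I reach a state of $w_0\subseteq v$; concatenating the segments yields the required $A$-run. This direction needs no compactness, since it only manipulates finite runs and may prescribe the endpoint. Granting (i) and (ii), both transfer conditions follow by feeding them into the case analysis of $\rightarrow_g$: a transition $v\xrightarrow{a}_g v_2$ of $G^1_c$ is either $v_2=Post^{A}_{obs_c}(v,a)\cap\gamma_{obs_c}^{-1}(o)\neq\varnothing$ --- matched, via (i), by the $G^2_c$-transition into $Post^{G_f}_{obs_c}(v',a)\cap\gamma_{obs_c}^{-1}(o)$, which is nonempty and $R$-related to $v_2$ --- or it is the self-loop $v_2=v$ licensed by $Sink^{A}_{obs_c}(v,a)$, matched via (ii) by the self-loop at $v'$; and symmetrically from $G^2_c$ to $G^1_c$. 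Together with the two facts from the first paragraph this yields the bisimulation, whence the corollary, since bisimilar safety games have the same winner.

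I expect part (ii), and within it the implication $Sink^{G_f}_{obs_c}(v',a)\Rightarrow Sink^{A}_{obs_c}(v,a)$, to be the main obstacle, precisely because a $G_f$-run does not unfold into an $A$-run term by term: beliefs are not closed under the transition relation of $A$. My plan there is a case split on the infinite $obs_c$-constant $G_f$-run witnessing $Sink^{G_f}_{obs_c}(v',a)$. If that run ever uses a sink self-loop $w\xrightarrow{a}_g w$, then $Sink^{A}_{obs_f}(w,a)$ yields an infinite $obs_f$-constant --- hence $obs_c$-constant --- $A$-run from some state $t\in w$, and the backward-choice argument of (i) supplies a finite $A$-path from a state of $v$ to $t$ staying in the $obs_c$-class $\gamma_{obs_c}^{-1}(\gamma_{obs_c}(v))$; concatenating gives the desired run. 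If the run never uses a self-loop, it is an infinite chain of $Post_{obs_f}$-steps, each contributing at least one $A$-transition that remains in that $obs_c$-class, so for every $K$ there is an $A$-path of length $\geq K$ from $v$ staying in $\gamma_{obs_c}^{-1}(\gamma_{obs_c}(v))$, and I obtain an infinite such run by a König/compactness argument. This last step is where finite branching of the arena is used; it holds in all cases of interest, in particular whenever $A$, and hence $G_f$, is finite, as for timed game automata after the DBM-based finitisation of~\cite{cdllr07}. The converse implication of (ii) and the remaining routine parts of (i) are then straightforward by comparison.
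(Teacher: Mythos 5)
Your argument for the $Post$ transitions is essentially the paper's own: the paper also characterises membership in $v_{succ}$ and in $\bigcup_{v'\in v'_{succ}}v'$ by finite ``proof sequences'' of $a$-transitions and converts one into the other by cutting an $A$-run into blocks at its $obs_f$-visible steps, using exactly your observation that an $obs_c$-visible transition is necessarily $obs_f$-visible when $obs_c\subset obs_f$. Your packaging is cleaner (an explicit commutation lemma for $Post$ plus an equivalence for $Sink$, then a case analysis on $\rightarrow_g$), and the preliminary bookkeeping (that $obs_c$ restricts to predicates on $G_f$-states, that $\varphi$ restricts to $\psi_f$, and that $R$ preserves observations and the safety predicate) is stated in the paper only implicitly or in the corollary, so nothing is lost.

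Where you genuinely diverge is the $Sink$ self-loops, and here your version is the more complete one: the paper's proof only treats successors with $\gamma_{obs_c}(v_{succ})\neq\gamma_{obs_c}(v_{pred})$, i.e.\ the $Post$ case, and never mentions the self-loop transitions generated by $Sink_{obs_c}$. You correctly single out $Sink^{G_f}_{obs_c}(v',a)\Rightarrow Sink^{A}_{obs_c}(v,a)$ as the delicate direction, and your caveat about finite branching is not pedantry. For an infinitely branching $A$ the implication can fail: take $s_0\xrightarrow{a}t_{n,1}\xrightarrow{a}\cdots\xrightarrow{a}t_{n,n}\xrightarrow{a}u$ for every $n$, with a predicate of $obs_f\setminus obs_c$ alternating along each branch and the safety predicate failing only at $u$; then the beliefs $w_k=\{t_{n,k}\mid n\geq k\}$ form an infinite $obs_c$-constant chain in $G_f$ although every infinite $a$-run of $A$ from $s_0$ leaves the $obs_c$-class, so $G^2_c$ has a self-loop that $G^1_c$ cannot match and $R$ is not a bisimulation. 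The theorem as stated for arbitrary LTS therefore needs a finite-branching (or finiteness) hypothesis on the arena; this is harmless when stacking finite knowledge games on top of one another, but it matters for the base application where $A$ is the uncountably branching semantics of a TGA, and your K\"onig-style argument is the right repair in the finitely branching case.
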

\begin{proof} 
 Suppose, that $G^1_{c} = (V, v_{init} , \Sigma, \rightarrow_g)$ and $G^2_{c}= (V', v'_{init} , \Sigma, \rightarrow'_g)$. 


 First, we have $v_{init} = \{s_{init}\}$ and $v'_{init} = \{\{s_{init}\}\}$ and thus $(v_{init}, v'_{init}) \in R$.
 
Consider an action $a \in \Sigma$ and a pair of bisimilar states $v_{pred} \in V$ and $v_{pred}' \in V'$ (i.e. $(v_{pred}, v_{pred}') \in R$).
 It's easy to see, that $\gamma_{obs}(v'_{pred}) = \gamma_{obs}(v_{pred})$.
 Now we will demonstrate, that for any $a$-successor of $v_{pred}$ there is a bisimular $a$-successor of $v'_{pred}$ and vice versa.
 More precisely, we will show that for any game states $v_{succ} \in V$ and $v'_{succ} \in V'$, if  $\gamma_{obs}(v_{succ}) = \gamma_{obs}(v'_{succ})$ and there are transitions $v_{pred} \xrightarrow{a}_g v_{succ}$ and $v_{pred}' \xrightarrow{a}_g v'_{succ}$, then $v_{succ} = \bigcup_{v' \in v'_{succ}} v'$ and thus $(v_{succ}, v'_{succ}) \in R$. 

Suppose, that $\gamma_{obs}(v_{succ})$ and $\gamma_{obs}(v'_{succ})$ are equal to $obs_{succ}$, and $\gamma_{obs}(v_{pred})$ and $\gamma_{obs}(v_{pred}')$ are equal to $obs_{pred}$.

First, it is easy to see that an LTS state $s$ belongs to the game state $v_{succ}$ iff there exists a sequence of transitions $s_1 \xrightarrow{a} s_2 \xrightarrow{a} \dots \xrightarrow{a} s_n \xrightarrow{a}_g s$, where $s_1 \in v_{pred}$, $\gamma_{obs}(s) = obs_{succ}$ and $\gamma_{obs}(s_i) = obs_{pred}$ for any $i \leq N$. 
We call such a sequence a proof sequence for $s$ in $G^1_{c}$.

Again, it's easy to see, that an LTS state $s'$ belongs to the set $\bigcup_{v' \in v'_{succ}} v'$ iff there exists a sequence of game transitions $v_1' \xrightarrow{a}'_g \dots \xrightarrow{a}'_g v_n' \xrightarrow{a}'_g v''$ such that $v'_1 \in v'_{pred}$, $s' \in v''$, $\gamma_{obs}(v'') = obs_{succ}$ and for any $i \leq N$ we have $\gamma_{obs}(v_i) = obs_{pred}$.
 The latter is true iff there exists a sequence of transitions 
\begin{eqnarray*}
   s_1^1 \xrightarrow{a} s_1^2 \xrightarrow{a} \dots \xrightarrow{a} s_1^{c_1} \xrightarrow{a} &  \\
   s_2^1 \xrightarrow{a} s_2^2 \xrightarrow{a} \dots \xrightarrow{a} s_2^{c_2} \xrightarrow{a} & \dots \\  
   s_n^1 \xrightarrow{a} s_n^2 \xrightarrow{a} \dots \xrightarrow{a} s_n^{c_n} \xrightarrow{a} & s'
\end{eqnarray*}
such that $s_1^1 \in \bigcup_{v' \in v_{pred}} v'$ and for any $i$, $j$ we have $\gamma_{obs}(s_i^j) = obs_{pred}$ and $\gamma_{obs'}(s_i^1) \neq \gamma_{obs'}(s_{i+1}^1)$.
We call such a sequence a proof sequence for $s'$ in $G^2_{c}$.

Now it's easy to see, that the sets $v_{succ}$ and $\bigcup_{v' \in v'_{succ}} v'$ coincide, since any proof sequence for $s$ in $G^2_{c}$ is a proof sequence for $s$ in $G^1_{c}$.
 At the same time, any proof sequence $s_1 \xrightarrow{a} s_2 \xrightarrow{a} \dots \xrightarrow{a} s_n \xrightarrow{a}_g s$ for $s$ in $G^1_{c}$ is a proof sequence for $s$ in $G^2_{c}$, if we compute the values of $c_i$ sequentially based on the changes of the $\gamma_{obs'}$ function. 
 More formally, we choose $c_k$ to be the largest integer such that $\gamma_{obs'}(s_{m+i}) = \gamma_{obs'}(s_{m+j})$ for $m = \sum_{l<k}c_l$ and any $i<c_k$, $j<c_k$. 

 This shows that $(v_{succ}, v'_{succ}) \in R$ and proves that $R$ is a bisimulation.
 \qed
\end{proof}

\setcounter{corollary}{0}
\begin{corollary} 
  Player I wins in $(G^1_{c}, \psi^1_{c})$ iff Player I wins in $(G^2_{c}, \psi^2_{c})$
\end{corollary}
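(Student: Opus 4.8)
The plan is to establish that $R$ is a bisimulation between the LTSs underlying $G^1_{c}$ and $G^2_{c}$ (Theorem~\ref{reusage_theorem}), from which the corollary follows because a bisimulation that also preserves the safety predicate preserves the existence of a winning strategy in a safety game. First I would dispatch the easy parts: $v_{init}=\{s_{init}\}$ in $G^1_{c}$ and $v'_{init}=\{\{s_{init}\}\}$ in $G^2_{c}$, so $\bigcup_{w\in v'_{init}}w=\{s_{init}\}=v_{init}$ and $(v_{init},v'_{init})\in R$; moreover $R$ is the graph of a total function from the states of $G^2_{c}$ to those of $G^1_{c}$ (each state of $G^2_{c}$ is a set of $obs_f$-beliefs sharing one $obs_c$-observation, so its union is a nonempty set of $A$-states with that same $obs_c$-observation, i.e.\ a legal state of $G^1_{c}$), and if $(v,v')\in R$ then $\gamma_{obs_c}(v)=\gamma_{obs_c}(v')$, hence $v\models\psi^1_{c}$ iff $v'\models\psi^2_{c}$ (both say $\varphi$ lies in that common observation, using $\varphi\in obs_c$).

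\textbf{The transfer property, $Post$ case.} This is the core. I would prove a membership characterization: for a fixed pair $(v_{pred},v'_{pred})\in R$, an action $a$, and a target $obs_c$-observation $o\neq\gamma_{obs_c}(v_{pred})$, an $A$-state $s$ lies in the $Post$-successor $v^o_{succ}$ of $v_{pred}$ in $G^1_{c}$ iff $s$ admits a \emph{proof sequence} $s_1\xrightarrow{a}s_2\xrightarrow{a}\cdots\xrightarrow{a}s_N\xrightarrow{a}s$ with $s_1\in v_{pred}$, $\gamma_{obs_c}(s_i)=\gamma_{obs_c}(v_{pred})$ for $i\le N$, and $\gamma_{obs_c}(s)=o$; this is just the definition of $Post_{obs_c}$. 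Symmetrically, the flattening $\bigcup_{w\in v'}w$ of the $Post$-successor $v'$ of $v'_{pred}$ in $G^2_{c}$ with target $o$ is the set of $s$ admitting a proof sequence \emph{in} $G_f$, i.e.\ a run of $G_f$-beliefs. I would then show the two sets coincide. The inclusion from the $G_f$-level to the $A$-level is easy: expanding each $G_f$-step into the $A$-run it represents turns a $G_f$-proof-sequence into an $A$-run all of whose transitions are $obs_c$-invisible except the last, i.e.\ an $A$-proof-sequence. For the converse, I would cut an $A$-proof-sequence into maximal blocks on which $\gamma_{obs_f}$ is constant; since every transition is $obs_c$-invisible and each block is maximal, the first step of each new block is $obs_f$-visible, so the blocks trace out genuine $Post_f$-transitions of $G_f$, yielding a $G_f$-proof-sequence whose first belief contains $s_1$ (hence lies in $v'_{pred}$ since $(v_{pred},v'_{pred})\in R$) and whose last belief contains $s$. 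This gives $(v^o_{succ},v')\in R$, and since the argument \emph{equates} the two successor sets for each $o$, the zig-zag property holds in both directions simultaneously.

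\textbf{The transfer property, $Sink$ case.} The remaining transitions are the divergence self-loops, so I must show $Sink_{obs_c}(v_{pred},a)$ holds in $A$ iff $Sink_{obs_c}(v'_{pred},a)$ holds in $G_f$. Given an infinite $obs_c$-invisible $a$-run of $A$ from $v_{pred}$, either $\gamma_{obs_f}$ changes infinitely often along it — then the block decomposition above yields an infinite $obs_c$-invisible $a$-run of $Post_f$-edges in $G_f$ — or the run is eventually $obs_f$-invisible, in which case from some point on it witnesses $Sink_{obs_f}$ at the current $G_f$-belief, so $G_f$ has an $a$-self-loop there, again giving an infinite $obs_c$-invisible $a$-run in $G_f$. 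Conversely, an infinite $obs_c$-invisible $a$-run in $G_f$ either uses infinitely many $Post_f$-edges, which unfold to an infinite $obs_c$-invisible $a$-run of $A$, or eventually loops on a single $Sink_f$ self-loop, which exposes an infinite $obs_f$-invisible (hence $obs_c$-invisible) $a$-run of $A$. Either way $Sink_{obs_c}$ transfers, and both games then take an $a$-self-loop that stays $R$-related.

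\textbf{Main obstacle and conclusion.} The genuinely delicate point is the block decomposition and its inverse re-gluing: one must keep the two levels of indices straight and argue that each block boundary is \emph{strictly} $obs_f$-visible (so the $G_f$-step is a real $Post_f$-edge, not an identity) and that re-gluing never merges beliefs that $G_f$ kept apart — concretely, the lengths $c_i$ of the blocks must be read off exactly from the jumps of $\gamma_{obs_f}$. Everything else is definition-chasing. Once $R$ is a bisimulation that preserves the safety predicates, the corollary is immediate: the winning region of a safety game is bisimulation-invariant (it is a $\mu$-calculus fixpoint of the controllable-predecessor operator), so Player I wins in $(G^1_{c},\psi^1_{c})$ iff Player I wins in $(G^2_{c},\psi^2_{c})$.
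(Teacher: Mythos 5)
Your proof is correct and follows essentially the same route as the paper: establish that $R$ is a predicate-preserving bisimulation (Theorem~\ref{reusage_theorem}) via the proof-sequence/block-decomposition argument, and then conclude that winning in a safety game is bisimulation-invariant. The only substantive difference is to your credit: you explicitly verify the zig-zag property for the $Sink$ self-loops (divergence within an observation), a case the paper's own proof of the theorem silently omits, and you make explicit that $\gamma_{obs_c}(v)=\gamma_{obs_c}(v')$ for $(v,v')\in R$ is what makes the bisimulation preserve $\psi^1_c$ and $\psi^2_c$.
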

\begin{proof}
According to the theorem~\ref{reusage_theorem} there is a bisimulation relation  $R = \{(v, v') | v = \bigcup_{s' \in v'}s'\}$ between $G^1_{c}$ and $G^2_{c}$.

It's easy to see that the bisimulation relation preserves the satisfiability of $\psi^1_{c}$ and $\psi^2_{c}$ formulas, i.e. for any pair of states $(v, v') \in R$, $v \models \psi^1_{c}$ iff $v' \models \psi^2_{c}$. 
Therefore Player I wins in $(G^2_{c}, \psi^2_c)$ iff Player I wins in $(G^1_{c}, \psi^1_c)$.

\qed

\end{proof}

\end{document}